
\documentclass[11pt]{article}%
\usepackage{amsmath}
\usepackage{amsfonts}
\usepackage{amssymb}
\usepackage{graphicx}
\usepackage{color}%
\setcounter{MaxMatrixCols}{30}
\providecommand{\U}[1]{\protect\rule{.1in}{.1in}}
\newtheorem{theorem}{Theorem}[section]

\newtheorem{corollary}{Corollary}[section]

\newtheorem{lemma}{Lemma}[section]

\newtheorem{remark}{Remark}[section]

\newenvironment{proof}[1][Proof]{\noindent\textbf{#1.} }{\ \rule{0.5em}{0.5em}}
\numberwithin{equation}{section}

\textwidth=6.45in
\textheight=9.1in
\oddsidemargin=0in
\evensidemargin=0in
\topmargin=-0.5in
\begin{document}

\title{On the Frequency of Drawdowns for Brownian Motion Processes}
\author{David Landriault\thanks{Department of Statistics and Actuarial Science,
University of Waterloo, Waterloo, ON, N2L 3G1, Canada (dlandria@uwaterloo.ca)}
\and Bin Li\thanks{Corresponding Author: Department of Statistics and Actuarial
Science, University of Waterloo, Waterloo, ON, N2L 3G1, Canada
(bin.li@uwaterloo.ca)}
\and Hongzhong Zhang\thanks{Department of Statistics, Columbia University, New
York, NY, 10027, USA (hzhang@stat.columbia.edu)} }
\date{{\small \today}}
\maketitle

\begin{abstract}
Drawdowns measuring the decline in value from {the} historical running maxima
over a given period of time, are considered as extremal events from the
standpoint of risk management. To date, research on the topic has mainly focus
on the side of severity by studying the first drawdown over certain
{pre-specified} size. In this paper, we extend the discussion by investigating
the frequency of drawdowns, and some of their inherent characteristics. We
{consider} two types of drawdown time sequences depending on whether a
historical running maximum {is reset or not}. For each type, we study the
frequency rate of drawdowns, the Laplace transform of the $n$-th drawdown
time, the distribution of the running maximum and the value process at the
$n$-th drawdown time, as well as some other quantities of interest.
Interesting relationships between these two drawdown time sequences are also
established. Finally, insurance policies protecting against the risk of
frequent drawdowns are also proposed and priced.

\textit{Keywords}: Drawdown; Frequency; Brownian motion

\textit{MSC}(2000): Primary 60G40; Secondary 60J65 91B2{4}

\end{abstract}

\baselineskip15.5pt

\section{Introduction}

We consider a drifted Brownian motion $X=\{X_{t},t\geq0\}$, defined on a
filtered probability space $(\Omega,\{\mathcal{F}_{t},t\geq0\},\mathbb{P})$,
with dynamics%
\[
X_{t}=x_{0}+\mu t+\sigma W_{t},
\]
where $x_{0}\in%
\mathbb{R}
$ is the initial value, $\mu\in%
\mathbb{R}
$, $\sigma>0$, and $\{W_{t},t\geq0\}$ is a standard Brownian motion. The time
of the first drawdown over size $a>0$ is denoted by
\begin{equation}
\tau_{a}:=\inf\{t>0:M_{t}-X_{t}\geq a\}, \label{tau a}%
\end{equation}
where $M=\left\{  M_{t},t\geq0\right\}  $ with $M_{t}:=\sup_{s\in\lbrack
0,t]}X_{t}$ is the running maximum process of $X$. Here and henceforth, we
follow the convention that $\inf{\emptyset}=\infty$ and $\sup{\emptyset}=0$.

Drawdown is one of the most frequently quoted path-dependent risk indicators
for mutual funds and commodity trading advisers (see, e.g., Burghardt et al.
\cite{DrawdownRisk}). From a risk management standpoint, large drawdowns
should be considered as extreme events of which both the severity and the
frequency need to be investigated. Considerable attention has been paid to the
severity aspect of the problem by {pre-specifying} a threshold, namely $a>0$,
of the size of drawdowns, and subsequently studying various properties
associated to the first drawdown time $\tau_{a}$. In this paper, we extend the
discussion by investigating the frequency of drawdowns. To this end, we derive
the joint distribution of the $n$-th drawdown time, the running maximum, and
the value process at the drawdown time for a drifted Brownian motion. Using
the general theory on renewal process, we proceed to characterize the behavior
of the frequency of drawdown episodes in a long time-horizon. Finally, we
introduce some insurance policies which protect against the risk associated
with frequent drawdowns. These policies are similar to the sequential barrier
options in over-the-counter (OTC) market (see, e.g., Pfeffer \cite{Pfef01}).
Through Carr's randomization of maturities, we provide closed-form pricing
formulas by making use of the main theoretical results of the paper.

\subsection{Literature review}

The first drawdown time $\tau_{a}$ is the first passage time of the drawdown
process $\left\{  M_{t}-X_{t},t\geq0\right\}  $ to level $a$ or above. It has
been extensively studied in the literature of applied probability. The joint
Laplace transform of $\tau_{a}$ and $M_{\tau_{a}}$ was first derived by Taylor
\cite{Taylor75} for a drifted Brownian motion. Lehoczky \cite{Lehoczky77}
extended the results to a general time-homogeneous diffusion by a perturbation
approximation approach. An infinite series expansion of the distribution of
$\tau_{a}$ was derived by Douady et al. \cite{DSY00} for a standard Brownian
motion and the results were generalized to a drifted Brownian motion by Magdon
et al. \cite{MDD04}. {The dual of drawdown, known as drawup, measures the
increase in value from the historical running minimum over a given period of
time. }The probability that a drawdown precedes a drawup is subsequently
studied by Hadjiliadis and Vecer \cite{HadjVece} and Pospisil et al.
\cite{PospVeceHadj} under the drifted Brownian motion and the general
time-homogeneous diffusion process, respectively. Mijatovic and Pistorius
\cite{MijatovicPistorius} derived the joint Laplace transform of $\tau_{a}$
and the last passage time at level $M_{\tau_{a}}$ prior to $\tau_{a}$,
associated with the joint distribution of the running maximum, the running
minimum, and the overshoot at $\tau_{a}$ for a spectrally negative L\'{e}vy
process. The probability that a drawdown precedes a drawup in a finite
time-horizon is studied under drifted Brownian motions and simple random walks
in \cite{ZhanHadj}. {More recently, \cite{Zhang13, ZhanHadj12} studied Laplace
transforms of the drawdown time, the so-called speed of market crash, and
various occupation times at the first exit and the drawdown time for a general
time-homogeneous diffusion process.}

In quantitative risk management, drawdowns and its descendants have become an
increasingly popular and relevant class of path-dependent risk indicators. A
portfolio optimization problem with constraints on drawdowns was explicitly
solved by Grossman and Zhou \cite{GrosZhou} in a Black-Scholes framework.
Hamelink and Hoesli \cite{HameHoes04} used the relative drawdown as a
performance measure in optimization of real estate portfolios. Chekhlov et al.
\cite{ChekUryaZaba} proposed a new family of risk measures called conditional
drawdown and studied parameter selection techniques and portfolio optimization
under constraints on conditional drawdown. Some {novel} financial derivatives
were introduced by Vecer \cite{Vece06} to hedge maximum drawdown risk.
Pospisil and Vecer \cite{PospVece} invented a class of Greeks to study the
sensitivity of investment portfolios to running maxima and drawdowns. Later,
Carr et al. \cite{CarrZhanHaji} introduced a class of European-style digital
drawdown insurances and proposed semi-static hedging strategies using barrier
options and vanilla options. The swap type insurances and cancelable insurances
against drawdowns were studied in Zhang et al. \cite{ZhanLeunHadj}.

\subsection{Definitions}

{While sustaining downside risk can be appropriately characterized using the
drawdown process and the first drawdown time, economic turmoil and volatile
market fluctuations are better described by quantities containing more
path-wise information, such as the frequency of drawdowns. }The existing
knowledge about the first drawdown time $\tau_{a}$ provides only limited and
implicit information about the frequency of drawdowns. For the purpose of
tackling the problem of frequency directly and systematically, we define below
two types of drawdown time sequences depending on whether the last running
maximum needs to be recovered or not.

The first sequence $\{\tilde{\tau}_{a}^{n},n\in%
\mathbb{N}
\}$ is called the \emph{drawdown times with recovery}, defined recursively as%
\begin{equation}
\tilde{\tau}_{a}^{n}:=\inf\{t>\tilde{\tau}_{a}^{n-1}:M_{t}-X_{t}\geq
a,M_{t}>M_{\tilde{\tau}_{a}^{n-1}}\}, \label{tau til}%
\end{equation}
where $\tilde{\tau}_{a}^{0}=0$. Note that, after each $\tilde{\tau}_{a}^{n-1}%
$, the corresponding running maximum $M_{\tilde{\tau}_{a}^{n-1}}$ must be
recovered before the next drawdown time $\tilde{\tau}_{a}^{n}$. In other
words, the running maximum is reset and updated only when the previous one is
revisited. Since the sample paths of $X$ are almost surely (a.s.) continuous,
we have that $M_{\tilde{\tau}_{a}^{n}}-X_{\tilde{\tau}_{a}^{n}}=a$ a.s. if
$\tilde{\tau}_{a}^{n}<\infty$.

The second sequence $\{\tau_{a}^{n},n\in%
\mathbb{N}
\}$ is called the \emph{drawdown times without recovery}, defined recursively
as%
\begin{equation}
\tau_{a}^{n}:=\inf\{t>\tau_{a}^{n-1}:M_{[\tau_{a}^{n-1},t]}-X_{t}\geq a\},
\label{tau}%
\end{equation}
where $\tau_{a}^{0}:=0$ and $M_{[s,t]}:=\sup_{s\leq u\leq t}X_{u}$. From
definition (\ref{tau}), it is implicitly assumed that the running maximum
$M_{\tau_{a}^{n}}$ is \textquotedblleft reset\textquotedblright\ to
$X_{\tau_{a}^{n}}$ at the drawdown time $\tau_{a}^{n}$. In fact, $\tau_{a}%
^{n}$ is the so-called iterated stopping times associated with $\tau_{a}$
defined as%
\begin{equation}
\tau_{a}^{n}=\left\{
\begin{array}
[c]{ll}%
\tau_{a}^{n-1}+\tau_{a}\circ\theta_{\tau_{a}^{n-1}}, & \text{when }\tau
_{a}^{n-1}\text{ and }\tau_{a}\circ\theta_{\tau_{a}^{n-1}}\text{ are
finite,}\\
\infty, & \text{otherwise,}%
\end{array}
\right.  \label{iterated}%
\end{equation}
where $\theta$ is the Markov shift operator such that $X_{t}\circ\theta
_{s}=X_{s+t}$ for $s,t\geq0$.

Note that both $\tau_{a}^{n}$ and $\tilde{\tau}_{a}^{n}$ are independent of
the initial value $x_{0}$ for not only the drifted Brownian motion $X$, but
also a general L\'{e}vy process. In view of definitions (\ref{tau}) and
(\ref{tau til}), it is clear that the following inclusive relation of the two
types of drawdown times holds:%
\[
\{\tilde{\tau}_{a}^{n},n\in%
\mathbb{N}
\}\subset\{\tau_{a}^{n},n\in%
\mathbb{N}
\}.
\]
In other words, for each $n\in%
\mathbb{N}
$, there exists a unique positive integer $m\geq n$ such that $\tilde{\tau
}_{a}^{n}=\tau_{a}^{m}$ (if $\tilde{\tau}_{a}^{n}<\infty$).

Our motivation for introducing the two drawdown time sequences are as follows.
The drawdown times with recovery $\{\tilde{\tau}_{a}^{n},n\in%
\mathbb{N}
\}$ are easy to identify from the sample paths of $X$ by searching the running
maxima. Moreover, they are consistent with definition (\ref{tau a}) of the
first drawdown $\tau_{a}$ in the sense that a drawdown can be considered as
incomplete if the running maximum has not been revisited. However, there are
also some crucial drawbacks of $\{\tilde{\tau}_{a}^{n},n\in%
\mathbb{N}
\}$ which motivate us to introduce the drawdown times without recovery
$\{\tau_{a}^{n},n\in%
\mathbb{N}
\}$. First, the downside risk during recovering periods is neglected. One or
more larger drawdowns may occur in a recovering period. Second, the threshold
$a$ needs to be adjusted to gain a more integrated understanding about the
severity of drawdowns. In other words, the selection of $a$ becomes tricky.
Third, the requirement of recovery is too strong. In real world, a historical
high water mark may never be recovered again, {as in the case of a financial
bubble \cite{Bubble11}.}

The rest of the paper is organized as follows. In Section 2, some
preliminaries on exit times and the first drawdown time $\tau_{a}$ of the
drifted Brownian motion $X$ are presented. In Section 3, the frequency rate of
drawdowns, and the Laplace transform of $\tilde{\tau}_{a}^{n}$ associated with
the distribution of $M_{\tilde{\tau}_{a}^{n}}$ and/or $X_{\tilde{\tau}_{a}%
^{n}}$ are derived. Section 4 is parallel to Section 3 but studies the
drawdown times without recovery $\{\tau_{a}^{n},n\in%
\mathbb{N}
\}$. Interesting connections between the two drawdown time sequences are
established. In Section 5, some insurance contracts are introduced to insure
against the risk of frequent drawdowns.

\section{Preliminaries}

Henceforth, for ease of notation, we write $\mathbb{E}_{x_{0}}[\,\cdot
\,]=\mathbb{E}[\left.  \cdot\,\right\vert X_{0}=x_{0}]$ for the conditional
expectation, $\mathbb{P}_{x_{0}}\{\,\cdot\,\}$ for the corresponding
probability and $\mathbb{E}_{x_{0}}[\,\cdot\,;U]=\mathbb{E}_{x_{0}}%
[\,\cdot\,{1}_{U}]$ with ${1}_{U}$ denoting the indicator function of a set
$U\subset\Omega$. In particular, when $x_{0}=0$, we drop the subscript $x_{0}$
from the conditional expectation and probability.

For $x\in%
\mathbb{R}
$, let $T_{x}^{+}=\inf\left\{  t\geq0:X_{t}>x\right\}  $ and$\ T_{x}^{-}%
=\inf\left\{  t\geq0:X_{t}<x\right\}  $ be the first passage times of $X$ to
levels in $\left[  x,\infty\right)  $ and $\left(  -\infty,x\right]  $,
respectively. For $a<x<b$ and $\lambda>0$, it is known that%
\begin{equation}
\mathbb{E}_{x}[\mathrm{e}^{-\lambda T_{a}^{-}}]=\mathrm{e}^{\beta_{\lambda
}^{-}(x-a)}\qquad\text{and}\qquad\mathbb{E}_{x}[\mathrm{e}^{-\lambda T_{b}%
^{+}}]=\mathrm{e}^{\beta_{\lambda}^{+}(x-b)}, \label{one-sided L}%
\end{equation}
where $\beta_{\lambda}^{\pm}=\frac{-\mu\pm\sqrt{\mu^{2}+2\lambda\sigma^{2}}%
}{\sigma^{2}}$ (see, e.g., formula 2.0.1 on Page 295 of Borodin and Salminen
\cite{borodin2002handbook}). By letting $\lambda\rightarrow0+$ in
(\ref{one-sided L}), we have%
\begin{equation}
\mathbb{P}_{x}\left\{  T_{b}^{+}<\infty\right\}  =\mathrm{e}^{\frac{-\mu
+|\mu|}{\sigma^{2}}(x-b)}\qquad\text{and}\qquad\mathbb{P}_{x}\left\{
T_{a}^{-}<\infty\right\}  =\mathrm{e}^{\frac{-\mu-|\mu|}{\sigma^{2}}(x-a)}.
\label{one-sided P}%
\end{equation}

From Taylor \cite{Taylor75} or Equation (17) of Lehoczky \cite{Lehoczky77}, we
have the following joint Laplace transform of the first drawdown time
$\tau_{a}$ and its running maximum $M_{\tau_{a}}$.

\begin{lemma}
\label{lem 1}For $\lambda,s>0$, we have%
\begin{equation}
\mathbb{E}\left[  \mathrm{e}^{-\lambda\tau_{a}-sM_{\tau_{a}}}\right]
=\frac{c_{\lambda}}{b_{\lambda}+s} \label{JL}%
\end{equation}
where $b_{\lambda}=\frac{\beta_{\lambda}^{+}\mathrm{e}^{-\beta_{\lambda}^{-}%
a}-\beta_{\lambda}^{-}\mathrm{e}^{-\beta_{\lambda}^{+}a}}{\mathrm{e}%
^{-\beta_{\lambda}^{-}a}-\mathrm{e}^{-\beta_{\lambda}^{+}a}}$ and $c_{\lambda
}=\frac{\beta_{\lambda}^{+}-\beta_{\lambda}^{-}}{\mathrm{e}^{-\beta_{\lambda
}^{-}a}-\mathrm{e}^{-\beta_{\lambda}^{+}a}}$.
\end{lemma}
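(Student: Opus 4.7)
The plan is to follow Lehoczky's partition argument to extract the joint Laplace transform of $(\tau_a, M_{\tau_a})$. The starting observation is that on $\{T_y^+ < \tau_a\}$ the drawdown process $M - X$ resets to zero at time $T_y^+$ (since $X_{T_y^+} = M_{T_y^+} = y$), so by the strong Markov property and the translation invariance of the drawdown process,
\[
\mathbb{E}\bigl[e^{-\lambda\tau_a};\,M_{\tau_a} > y\bigr] \;=\; h(y)\cdot \mathbb{E}\bigl[e^{-\lambda\tau_a}\bigr],\qquad h(y) := \mathbb{E}\bigl[e^{-\lambda T_y^+};\, T_y^+ < \tau_a\bigr].
\]
Iterating this identity at an intermediate level $y_1 \in (0,y)$ forces $h(y_1+y_2) = h(y_1)h(y_2)$, hence $h(y) = e^{-b_\lambda y}$ for some $b_\lambda > 0$. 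Consequently $\mathbb{E}[e^{-\lambda\tau_a};\,M_{\tau_a}\in dy]$ has density $b_\lambda\,\mathbb{E}[e^{-\lambda\tau_a}]\,e^{-b_\lambda y}$ on $(0,\infty)$ (with no atom at $0$, since $X$ oscillates immediately so $M_{\tau_a} > 0$ almost surely), and integrating against $e^{-sy}$ yields
\[
\mathbb{E}\bigl[e^{-\lambda\tau_a - sM_{\tau_a}}\bigr] \;=\; \frac{b_\lambda\,\mathbb{E}[e^{-\lambda\tau_a}]}{s + b_\lambda}.
\]

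Next I would compute $b_\lambda$ explicitly. Partitioning $[0,y]$ into $n$ equal subintervals of length $\delta = y/n$, applying the strong Markov at each $T_{i\delta}^+$, and noting that between $T_{(i-1)\delta}^+$ and $T_{i\delta}^+$ the running maximum lies in $[(i-1)\delta, i\delta)$ so the moving drawdown barrier $M_t - a$ agrees with the constant barrier $(i-1)\delta - a$ up to an error $O(\delta)$, I would obtain $h(y) = \lim_{n\to\infty} p(\delta)^n$ where $p(\delta) := \mathbb{E}_0[e^{-\lambda T_\delta^+};\,T_\delta^+ < T_{-a}^-]$ is the classical weighted exit probability from $[-a,\delta]$. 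Solving the Feynman--Kac ODE on this interval gives $p(\delta)$ in closed form as a ratio of $e^{\beta_\lambda^\pm(\cdot)}$ terms, and the limit $b_\lambda = -\lim_{\delta\to 0^+} \delta^{-1}\log p(\delta)$ evaluates, after multiplying numerator and denominator by $e^{-(\beta_\lambda^+ + \beta_\lambda^-)a}$, to the expression stated in the lemma.

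To identify $\mathbb{E}[e^{-\lambda\tau_a}]$, I would exploit that the drawdown process $D := M - X$ is a drifted diffusion on $[0,\infty)$ reflected at the origin, with generator $\mathcal{L} := \frac{\sigma^2}{2}\partial_d^2 - \mu\partial_d$. The function $u(d) := \mathbb{E}_d[e^{-\lambda\tau_a}]$ solves $\mathcal{L}u = \lambda u$ on $(0,a)$ with Dirichlet condition $u(a)=1$ and Neumann condition $u'(0)=0$ (reflection at the origin). Writing the general solution as $c_1 e^{-\beta_\lambda^- d} + c_2 e^{-\beta_\lambda^+ d}$ (whose exponents are precisely the characteristic roots of $\mathcal{L} - \lambda$) and solving the resulting $2\times 2$ linear system gives
\[
\mathbb{E}[e^{-\lambda\tau_a}] \;=\; u(0) \;=\; \frac{\beta_\lambda^+ - \beta_\lambda^-}{\beta_\lambda^+ e^{-\beta_\lambda^- a} - \beta_\lambda^- e^{-\beta_\lambda^+ a}},
\]
and the algebraic identity $b_\lambda \cdot \mathbb{E}[e^{-\lambda\tau_a}] = c_\lambda$ then follows by direct cancellation, completing the derivation.

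The main technical obstacle is the justification of the partition limit $h(y) = \lim_n p(\delta)^n$: one must show that replacing the true moving drawdown barrier inside each subinterval by the constant barrier $(i-1)\delta - a$ introduces an error that vanishes as $n \to \infty$. A clean route is to sandwich the quantity $h(y)$ between the $n$-fold products obtained from the two extreme constant barriers $(i-1)\delta - a$ and $i\delta - a$, use the monotonicity and smoothness of the one-sided exit Laplace transform in the boundary location, and then let $\delta \to 0$ so that both bounding products collapse to the same limit $e^{-b_\lambda y}$.
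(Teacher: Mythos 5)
Your argument is correct: the multiplicativity of $h(y)=\mathbb{E}[\mathrm{e}^{-\lambda T_y^{+}};T_y^{+}<\tau_a]$ giving the exponential law of $M_{\tau_a}$ under the $\lambda$-discounting, the sandwich between the two constant barriers which yields $b_\lambda=\frac{\beta_\lambda^{+}\mathrm{e}^{-\beta_\lambda^{-}a}-\beta_\lambda^{-}\mathrm{e}^{-\beta_\lambda^{+}a}}{\mathrm{e}^{-\beta_\lambda^{-}a}-\mathrm{e}^{-\beta_\lambda^{+}a}}$, and the Neumann/Dirichlet ODE for the reflected drawdown process giving $\mathbb{E}[\mathrm{e}^{-\lambda\tau_a}]=c_\lambda/b_\lambda$ all check out. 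The paper does not prove this lemma itself but imports it from Taylor (1975) and Lehoczky (1977), and your derivation is essentially Lehoczky's perturbation argument, i.e., the same approach as the cited source.
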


A Laplace inversion of (\ref{JL}) with respect to $s$ results in
\begin{equation}
\mathbb{E}[\mathrm{e}^{-\lambda\tau_{a}};M_{\tau_{a}}>x]=\frac{c_{\lambda}%
}{b_{\lambda}}\mathrm{e}^{-b_{\lambda}x},\label{L1}%
\end{equation}
for $x>0$. Furthermore, letting $x\rightarrow0+$ in (\ref{L1}), we immediately
have
\begin{equation}
\mathbb{E}[\mathrm{e}^{-\lambda\tau_{a}}]=c_{\lambda}/b_{\lambda}.\label{lap}%
\end{equation}
A numerical evaluation of the distribution function of $\tau_{a}$ (and more
generally $\tau_{a}^{n}$ and $\tilde{\tau}_{a}^{n}$) by an inverse Laplace
transform method will be given at the end of Section 4. Other forms of
infinite series expansion of the distribution of $\tau_{a}$ were derived by
Douady et al. \cite{DSY00} and Magdon et al. \cite{MDD04} for a standard
Brownian motion and a drifted Brownian motion, respectively. By taking the
derivative with respect to $\lambda$ in (\ref{lap}) and letting $\lambda
\rightarrow0+$, we have
\[
\mathbb{E}[\tau_{a}]=\frac{\sigma^{2}\mathrm{e}^{2\mu a/\sigma^{2}}-\sigma
^{2}-2\mu a}{2\mu^{2}}.
\]
It is straightforward to check that
\begin{equation}
\lim_{\lambda\rightarrow0+}b_{\lambda}=\lim_{\lambda\rightarrow0+}c_{\lambda
}=\frac{\gamma}{\mathrm{e}^{\gamma a}-1},\label{bc}%
\end{equation}
where $\gamma=\frac{2\mu}{\sigma^{2}}$. In the risk theory literature, the
constant $\gamma$ is known as the \emph{adjustment coefficient}. In
particular, when $\mu=0$, the quantity $\frac{\gamma}{\mathrm{e}^{\gamma a}%
-1}$ is understood as $\lim_{\gamma\rightarrow0}\frac{\gamma}{\mathrm{e}%
^{\gamma a}-1}=\frac{1}{a}$. It follows from (\ref{lap}) and (\ref{bc}) that
\[
\mathbb{P}\left\{  \tau_{a}<\infty\right\}  =\lim_{\lambda\rightarrow
0+}\mathbb{E}\left[  \mathrm{e}^{-\lambda\tau_{a}}\right]  =1.
\]
Furthermore, we have
\begin{equation}
\mathbb{P}\left\{  M_{\tau_{a}}\geq x\right\}  =\mathbb{P}\left\{  M_{\tau
_{a}}\geq x,\tau_{a}<\infty\right\}  =\lim_{\lambda\rightarrow0+}%
\mathbb{E}\left[  \mathrm{e}^{-\lambda\tau_{a}};M_{\tau_{a}}\geq x\right]
=\mathrm{e}^{-\frac{\gamma x}{\mathrm{e}^{\gamma a}-1}}.\label{M}%
\end{equation}
which implies that the running maximum at the first drawdown time $M_{\tau
_{a}}$ follows an exponential distribution with mean $\left(  \mathrm{e}%
^{\gamma a}-1\right)  /\gamma$ (see, e.g., Lehoczky \cite{Lehoczky77}).

\section{The drawdown times with recovery}

We begin our analysis with the drawdown times with recovery $\{\tilde{\tau
}_{a}^{n},n\in%
\mathbb{N}
\}$ given that their structure leads to a simpler analysis than their
counterpart ones without recovery.

We first consider the asymptotic behavior of the frequency rate of drawdowns
with recovery. Let $\tilde{N}_{t}^{a}=\sum\nolimits_{n=1}^{\infty}1_{\left\{
\tilde{\tau}_{a}^{n}\leq t\right\}  }$ be the number of drawdowns with
recovery observed by time $t\geq0$, and define $\tilde{N}_{t}^{a}/t$ to be the
frequency rate of drawdowns. It is clear that $\left\{  \tilde{N}_{t}%
^{a},t\geq0\right\}  $ is a delayed renewal process where the first drawdown
time is distributed as $\tau_{a}$, while the subsequent inter-drawdown times
are independent and identically distributed as $T_{X_{\tau_{a}}+a}^{+}%
\circ\tau_{a}$. From Theorem 6.1.1 of Rolski et al. \cite{Rolskibook}, it
follows that, with probability one,
\[
\lim_{t\rightarrow\infty}\frac{\tilde{N}_{t}^{a}}{t}=\left\{
\begin{array}
[c]{lc}%
\frac{1}{\mathbb{E}[\tau_{a}]+\mathbb{E}[T_{a}^{+}]}=\frac{2\mu^{2}}%
{\sigma^{2}\left(  \mathrm{e}^{2\mu a/\sigma^{2}}-1\right)  }, & \text{if }%
\mu>0,\\
0, & \text{if }\mu\leq0.
\end{array}
\right.
\]
Moreover, one could easily obtain some central limit theorems for $\tilde
{N}_{t}^{a}$ by Theorem 6.1.2 of Rolski et al. \cite{Rolskibook}.

Next, we study the joint Laplace transform of $\tilde{\tau}_{a}^{n}$ and
$M_{\tilde{\tau}_{a}^{n}}$. Note that $X_{\tilde{\tau}_{a}^{n}}=M_{\tilde
{\tau}_{a}^{n}}-a$ a.s. whenever $\tilde{\tau}_{a}^{n}<\infty$, and thus the
following theorem is sufficient to characterize the triplet $\left(
\tilde{\tau}_{a}^{n},M_{\tilde{\tau}_{a}^{n}},X_{\tilde{\tau}_{a}^{n}}\right)
$.

\begin{theorem}
\label{thm M til L}For $n\in%
\mathbb{N}
$ and $\lambda,x\geq0$, we have%
\begin{equation}
\mathbb{E}\left[  \mathrm{e}^{-\lambda\tilde{\tau}_{a}^{n}};M_{\tilde{\tau
}_{a}^{n}}>x\right]  =\left(  \frac{c_{\lambda}}{b_{\lambda}}\right)
^{n}\mathrm{e}^{-(n-1)\beta_{\lambda}^{+}a}\sum_{m=0}^{n-1}\frac{(b_{\lambda
}x)^{m}}{m!}\mathrm{e}^{-b_{\lambda}x}. \label{M tau til L}%
\end{equation}

\end{theorem}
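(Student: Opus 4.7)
I would prove the identity by induction on $n$, first establishing the joint Laplace transform of $(\tilde{\tau}_a^n, M_{\tilde{\tau}_a^n})$ in closed form and then inverting in the spatial variable. The base case $n=1$ is immediate from \eqref{L1}, since $\tilde{\tau}_a^1 = \tau_a$ and the right side of \eqref{M tau til L} collapses to $(c_\lambda/b_\lambda)\mathrm{e}^{-b_\lambda x}$.

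For the inductive step, I would use the strong Markov property to decompose the $n$-th drawdown with recovery into three conditionally independent pieces. At time $\tilde{\tau}_a^{n-1}$, the sample path continuity of $X$ gives $X_{\tilde{\tau}_a^{n-1}} = M_{\tilde{\tau}_a^{n-1}} - a$ on $\{\tilde{\tau}_a^{n-1}<\infty\}$, so by \eqref{tau til} the next drawdown requires (i) the process to first climb back up to the old maximum $M_{\tilde{\tau}_a^{n-1}}$, taking time $T_a^+\circ\theta_{\tilde{\tau}_a^{n-1}}$ with Laplace transform $\mathrm{e}^{-\beta_\lambda^+ a}$ by \eqref{one-sided L}, and then (ii) undergo a fresh first-drawdown of size $a$ starting from that new launching point. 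Moreover the running-maximum increment $M_{\tilde{\tau}_a^n} - M_{\tilde{\tau}_a^{n-1}}$ coincides with the running-max of the fresh drawdown episode. Applying the strong Markov property at $\tilde{\tau}_a^{n-1}$ and again at the recovery time, together with the translation invariance of $X$, this yields the recursion
\[
\mathbb{E}\bigl[\mathrm{e}^{-\lambda\tilde{\tau}_a^n - s M_{\tilde{\tau}_a^n}}\bigr] \;=\; \mathbb{E}\bigl[\mathrm{e}^{-\lambda\tilde{\tau}_a^{n-1} - s M_{\tilde{\tau}_a^{n-1}}}\bigr]\,\mathrm{e}^{-\beta_\lambda^+ a}\,\mathbb{E}\bigl[\mathrm{e}^{-\lambda\tau_a - s M_{\tau_a}}\bigr].
\]
Invoking Lemma \ref{lem 1} and iterating gives the closed form
\[
\mathbb{E}\bigl[\mathrm{e}^{-\lambda\tilde{\tau}_a^n - s M_{\tilde{\tau}_a^n}}\bigr] \;=\; \left(\frac{c_\lambda}{b_\lambda+s}\right)^{\!n}\mathrm{e}^{-(n-1)\beta_\lambda^+ a}.
\]

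To finish, I would invert this transform in $s$. Setting $s=0$ yields $\mathbb{E}[\mathrm{e}^{-\lambda\tilde{\tau}_a^n}] = (c_\lambda/b_\lambda)^n\mathrm{e}^{-(n-1)\beta_\lambda^+ a}$, so under the sub-probability measure obtained by weighting with $\mathrm{e}^{-\lambda\tilde{\tau}_a^n}$ and renormalizing, $M_{\tilde{\tau}_a^n}$ has Laplace transform $(b_\lambda/(b_\lambda+s))^n$, exactly that of an Erlang distribution with shape $n$ and rate $b_\lambda$, whose survival function is $\sum_{m=0}^{n-1}(b_\lambda x)^m\mathrm{e}^{-b_\lambda x}/m!$. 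Multiplying by the normalizing constant recovers \eqref{M tau til L}. The main technical obstacle is the careful bookkeeping at the inductive step: one must verify that on $\{\tilde{\tau}_a^{n-1}<\infty\}$ the recovery segment and the subsequent fresh drawdown are genuinely independent of $\mathcal{F}_{\tilde{\tau}_a^{n-1}}$, and that the constraint $M_t > M_{\tilde{\tau}_a^{n-1}}$ in \eqref{tau til} reduces, by the immediate oscillation of Brownian motion at the recovery point, to the plain drawdown condition for the shifted process. The remaining Laplace-inversion step is then a routine recognition of the Erlang tail.
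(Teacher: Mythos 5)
Your proposal is correct and follows essentially the same route as the paper: both derive the joint transform $\mathbb{E}[\mathrm{e}^{-\lambda\tilde{\tau}_a^n - sM_{\tilde{\tau}_a^n}}] = (c_\lambda/(b_\lambda+s))^n\mathrm{e}^{-(n-1)\beta_\lambda^+ a}$ by iterating the strong Markov decomposition into a drawdown episode followed by a recovery of the running maximum (contributing the factor $\mathrm{e}^{-\beta_\lambda^+ a}$ via \eqref{one-sided L}), and then recognize $(b_\lambda/(b_\lambda+s))^n$ as an Erlang Laplace transform whose tail inversion gives \eqref{M tau til L}. The only cosmetic difference is that you peel off the last episode while the paper peels off the first; your extra remarks on the reduction of the constraint $M_t>M_{\tilde{\tau}_a^{n-1}}$ at the recovery point are a fair elaboration of details the paper leaves implicit.
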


\begin{proof}
To prove this result, we first condition on the first drawdown time $\tau_{a}$
and subsequently on the time for the process $X$ to recover its running
maximum. Using the strong Markov property of $X$ and (\ref{JL}), it is clear
that
\begin{align}
\mathbb{E}\left[  \mathrm{e}^{-\lambda\tilde{\tau}_{a}^{n}-sM_{\tilde{\tau
}_{a}^{n}}}\right]   &  =\mathbb{E}\left[  \mathrm{e}^{-\lambda\tilde{\tau
}_{a}^{n}-sM_{\tilde{\tau}_{a}^{n}}};\tilde{\tau}_{a}^{n}<\infty\right]
\nonumber\\
&  =\mathbb{E}\left[  \mathrm{e}^{-\lambda\tau_{a}-sM_{\tau_{a}}}\right]
\mathbb{E}\left[  \mathrm{e}^{-T_{a}^{+}}\right]  \mathbb{E}\left[
\mathrm{e}^{-\lambda\tilde{\tau}_{a}^{n-1}-sM\tilde{\tau}_{a}^{n-1}}\right]
\nonumber\\
&  =\frac{c_{\lambda}}{b_{\lambda}+s}\mathrm{e}^{-\beta_{\lambda}^{+}%
a}\mathbb{E}\left[  \mathrm{e}^{-\lambda\tilde{\tau}_{a}^{n-1}-sM\tilde{\tau
}_{a}^{n-1}}\right] \nonumber\\
&  =\left(  \frac{c_{\lambda}}{b_{\lambda}+s}\right)  ^{n-1}\mathrm{e}%
^{-(n-1)\beta_{\lambda}^{+}a}\mathbb{E}\left[  \mathrm{e}^{-\lambda\tau
_{a}-sM_{\tau_{a}}}\right] \nonumber\\
&  =\left(  \frac{c_{\lambda}}{b_{\lambda}+s}\right)  ^{n}\mathrm{e}%
^{-(n-1)\beta_{\lambda}^{+}a}. \label{babu}%
\end{align}
Given that $\left(  b_{\lambda}/\left(  b_{\lambda}+s\right)  \right)  ^{n}$
is the Laplace transform of an Erlang random variable (rv) with mean
$n/b_{\lambda}$ and variance $n/\left(  b_{\lambda}\right)  ^{2}$, a tail
inversion of (\ref{babu}) wrt $s$ yields (\ref{M tau til L}). \bigskip
\end{proof}

In particular, letting $x\rightarrow0+$, we have%
\begin{equation}
\mathbb{E}\left[  \mathrm{e}^{-\lambda\tilde{\tau}_{a}^{n}}\right]  =\left(
c_{\lambda}/b_{\lambda}\right)  ^{n}\mathrm{e}^{-(n-1)\beta_{\lambda}^{+}a},
\label{L tau til}%
\end{equation}
for $n\in%
\mathbb{N}
$. Furthermore, letting $\lambda\rightarrow0+$ in (\ref{L tau til}), together
with (\ref{bc}) and $\lim_{\lambda\rightarrow0+}\beta_{\lambda}^{+}=\frac
{-\mu+|\mu|}{\sigma^{2}}$, we have
\begin{equation}
\mathbb{P}\left\{  \tilde{\tau}_{a}^{n}<\infty\right\}  =\left\{
\begin{array}
[c]{lc}%
1, & \text{if }\mu\geq0,\\
\mathrm{e}^{(n-1)\gamma a}, & \text{if }\mu<0.
\end{array}
\right.  \label{tau til P}%
\end{equation}
In other words, a historical running maximum may never be recovered if the
drift $\mu<0$.\bigskip

\begin{corollary}
For $n\in%
\mathbb{N}
$ and $x>0$, we have%
\begin{equation}
\mathbb{P}\left\{  M_{\tilde{\tau}_{a}^{n}}>x,\tilde{\tau}_{a}^{n}%
<\infty\right\}  =\left\{
\begin{array}
[c]{lc}%
\mathrm{e}^{-\frac{\gamma x}{\mathrm{e}^{\gamma a}-1}}\sum_{m=0}^{n-1}\frac
{1}{m!}\left(  \frac{\gamma x}{\mathrm{e}^{\gamma a}-1}\right)  ^{m}, &
\text{if }\mu\geq0,\\
\mathrm{e}^{(n-1)\gamma a}\mathrm{e}^{-\frac{\gamma x}{\mathrm{e}^{\gamma
a}-1}}\sum_{m=0}^{n-1}\frac{1}{m!}\left(  \frac{\gamma x}{\mathrm{e}^{\gamma
a}-1}\right)  ^{m}, & \text{if }\mu<0.
\end{array}
\right.  \text{.} \label{M tau til P}%
\end{equation}

\end{corollary}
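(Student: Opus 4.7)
The plan is to obtain the corollary as a direct consequence of Theorem \ref{thm M til L} by passing to the limit $\lambda\to 0+$ in identity (\ref{M tau til L}), much in the same way that (\ref{tau til P}) was derived from (\ref{L tau til}).

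More precisely, I would first note that on the event $\{\tilde{\tau}_a^n<\infty\}$ we have $\mathrm{e}^{-\lambda\tilde{\tau}_a^n}\uparrow 1$ as $\lambda\downarrow 0$, while on $\{\tilde{\tau}_a^n=\infty\}$ the integrand is zero, so by monotone convergence
\[
\lim_{\lambda\to 0+}\mathbb{E}\!\left[\mathrm{e}^{-\lambda\tilde{\tau}_a^n};M_{\tilde{\tau}_a^n}>x\right]=\mathbb{P}\{M_{\tilde{\tau}_a^n}>x,\tilde{\tau}_a^n<\infty\}.
\]
Next I would evaluate the right-hand side of (\ref{M tau til L}) in the same limit. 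By (\ref{bc}), $b_\lambda$ and $c_\lambda$ both tend to $\gamma/(\mathrm{e}^{\gamma a}-1)$, so $c_\lambda/b_\lambda\to 1$ and the factor $\mathrm{e}^{-b_\lambda x}\sum_{m=0}^{n-1}(b_\lambda x)^m/m!$ converges to $\mathrm{e}^{-\gamma x/(\mathrm{e}^{\gamma a}-1)}\sum_{m=0}^{n-1}\frac{1}{m!}\!\left(\gamma x/(\mathrm{e}^{\gamma a}-1)\right)^m$.

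The only remaining piece is $\mathrm{e}^{-(n-1)\beta_\lambda^+ a}$. From the formula $\beta_\lambda^\pm=(-\mu\pm\sqrt{\mu^2+2\lambda\sigma^2})/\sigma^2$ it follows that $\lim_{\lambda\to 0+}\beta_\lambda^+=(-\mu+|\mu|)/\sigma^2$, which equals $0$ when $\mu\geq 0$ and $-\gamma$ when $\mu<0$. Substituting these two cases gives $\mathrm{e}^{-(n-1)\beta_\lambda^+ a}\to 1$ for $\mu\geq 0$ and $\mathrm{e}^{(n-1)\gamma a}$ for $\mu<0$, yielding exactly the two branches in (\ref{M tau til P}).

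There is essentially no obstacle here: the argument is a routine limit computation, and the only care required is the case distinction based on the sign of $\mu$ when evaluating $\lim_{\lambda\to 0+}\beta_\lambda^+$. As a sanity check, sending $x\downarrow 0$ in (\ref{M tau til P}) recovers $\sum_{m=0}^{n-1}\frac{1}{m!}\cdot 0^m = 1$, and thus reproduces (\ref{tau til P}), confirming consistency with the earlier results.
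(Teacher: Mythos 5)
Your proof is correct and follows essentially the same route as the paper: the authors also take $\lambda\to 0+$ in (\ref{M tau til L}), the only cosmetic difference being that they first factor out $\mathbb{E}[\mathrm{e}^{-\lambda\tilde{\tau}_a^n}]$ via (\ref{L tau til}) and then invoke the already-established (\ref{tau til P}) for the case distinction on the sign of $\mu$, whereas you recompute $\lim_{\lambda\to 0+}\beta_\lambda^+$ inline. No gap.
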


\begin{proof}
Substituting (\ref{L tau til}) into (\ref{M tau til L})\ yields
\begin{equation}
\mathbb{E}\left[  \mathrm{e}^{-\lambda\tilde{\tau}_{a}^{n}};M_{\tilde{\tau
}_{a}^{n}}>x\right]  =\mathbb{E}\left[  \mathrm{e}^{-\lambda\tilde{\tau}%
_{a}^{n}}\right]  \sum_{m=0}^{n-1}\frac{(b_{\lambda}x)^{m}}{m!}\mathrm{e}%
^{-b_{\lambda}x}. \label{pep}%
\end{equation}
Taking the limit when $\lambda\rightarrow0+$ in (\ref{pep}), and then using
(\ref{bc}), one arrives at%
\begin{equation}
\mathbb{P}\left\{  M_{\tilde{\tau}_{a}^{n}}>x,\tilde{\tau}_{a}^{n}%
<\infty\right\}  =\mathbb{P}\left\{  \tilde{\tau}_{a}^{n}<\infty\right\}
\sum_{m=0}^{n-1}\frac{(\frac{\gamma x}{\mathrm{e}^{\gamma a}-1})^{m}}%
{m!}\mathrm{e}^{-\frac{\gamma x}{\mathrm{e}^{\gamma a}-1}}\text{.}
\label{pep1}%
\end{equation}
Substituting (\ref{tau til P}) into (\ref{pep1}) results in (\ref{M tau til P}%
).\bigskip
\end{proof}

Note that (\ref{pep1}) indicates
\begin{equation}
\mathbb{P}\left\{  M_{\tilde{\tau}_{a}^{n}}>x\left\vert \tilde{\tau}_{a}%
^{n}<\infty\right.  \right\}  =\sum_{m=0}^{n-1}\frac{1}{m!}\left(
\frac{\gamma x}{\mathrm{e}^{\gamma a}-1}\right)  ^{m}\mathrm{e}^{-\frac{\gamma
x}{\mathrm{e}^{\gamma a}-1}}\text{,} \label{abc}%
\end{equation}
for all $\mu\in\mathbb{R}$. This result can be interpreted probabilistically.
Indeed, when $\tilde{\tau}_{a}^{n}<\infty$, $M_{\tilde{\tau}_{a}^{m}%
}-M_{\tilde{\tau}_{a}^{m-1}}$ follows an exponential distribution with mean
$\left(  \mathrm{e}^{\gamma a}-1\right)  /\gamma$ for $m=1,2,...,n$. From the
strong Markov property, the rv's $M_{\tilde{\tau}_{a}^{m}}-M_{\tilde{\tau}%
_{a}^{m-1}}$ for all $m=1,2,...,n$ are all independent, and thus
$M_{\tilde{\tau}_{a}^{n}}=\sum_{m=1}^{n}\left(  M_{\tilde{\tau}_{a}^{m}%
}-M_{\tilde{\tau}_{a}^{m-1}}\right)  $ is an Erlang rv with survival function
(\ref{abc}).

In particular, when $n\rightarrow\infty$, it is easy to check that
$\lim_{n\rightarrow\infty}\mathbb{P}\left\{  M_{\tilde{\tau}_{a}^{n}%
}>x\right\}  =\mathbb{P}\left\{  T_{x}^{+}<\infty\right\}  $ which agrees with
(\ref{one-sided P}). For completeness, we conclude this section with a result
that is immediate from (\ref{M tau til L}) and the fact that $M_{\tilde{\tau
}_{a}^{n}}-X_{\tilde{\tau}_{a}^{n}}=a$ a.s. whenever $\tilde{\tau}_{a}%
^{n}<\infty$.

\begin{corollary}
For $n\in%
\mathbb{N}
$ and $x\geq-a$, we have%
\[
\mathbb{E}\left[  \mathrm{e}^{-\lambda\tilde{\tau}_{a}^{n}};X_{\tilde{\tau
}_{a}^{n}}>x\right]  =\left(  \frac{c_{\lambda}}{b_{\lambda}}\right)
^{n}\mathrm{e}^{-(n-1)\beta_{\lambda}^{+}a}\sum_{m=0}^{n-1}\frac{\left(
b_{\lambda}(x+a)\right)  ^{m}}{m!}\mathrm{e}^{-b_{\lambda}(x+a)}.
\]

\end{corollary}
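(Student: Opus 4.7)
The plan is to reduce the assertion directly to Theorem~\ref{thm M til L} by exploiting the almost sure identity $M_{\tilde{\tau}_{a}^{n}}-X_{\tilde{\tau}_{a}^{n}}=a$ on $\{\tilde{\tau}_{a}^{n}<\infty\}$, which was already recorded earlier in the section as a consequence of the continuity of the sample paths of $X$ and the definition \eqref{tau til} of $\tilde{\tau}_{a}^{n}$.

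First I would note that on $\{\tilde{\tau}_{a}^{n}<\infty\}$ the events $\{X_{\tilde{\tau}_{a}^{n}}>x\}$ and $\{M_{\tilde{\tau}_{a}^{n}}>x+a\}$ coincide a.s. Since $\mathrm{e}^{-\lambda\tilde{\tau}_{a}^{n}}=0$ on $\{\tilde{\tau}_{a}^{n}=\infty\}$ by the convention $\inf\emptyset=\infty$, we may freely insert the indicator of $\{\tilde{\tau}_{a}^{n}<\infty\}$, so that
\[
\mathbb{E}\!\left[\mathrm{e}^{-\lambda\tilde{\tau}_{a}^{n}};X_{\tilde{\tau}_{a}^{n}}>x\right]=\mathbb{E}\!\left[\mathrm{e}^{-\lambda\tilde{\tau}_{a}^{n}};M_{\tilde{\tau}_{a}^{n}}>x+a\right].
\]
The hypothesis $x\geq -a$ guarantees $x+a\geq 0$, which is exactly the admissible range for the variable appearing in Theorem~\ref{thm M til L}. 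Applying \eqref{M tau til L} with $x$ replaced by $x+a$ then yields the claimed formula.

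There is no genuine obstacle: the only thing to verify is that the identity $M_{\tilde{\tau}_{a}^{n}}=X_{\tilde{\tau}_{a}^{n}}+a$ holds a.s.\ on $\{\tilde{\tau}_{a}^{n}<\infty\}$ (already stated in the text just after \eqref{tau til}) and that the range condition on $x$ matches the admissible range of Theorem~\ref{thm M til L} after the shift. The proof itself is a one-line change of variable.
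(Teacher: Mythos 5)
Your proof is correct and is exactly the argument the paper intends: the text states that this corollary is ``immediate from (\ref{M tau til L}) and the fact that $M_{\tilde{\tau}_{a}^{n}}-X_{\tilde{\tau}_{a}^{n}}=a$ a.s. whenever $\tilde{\tau}_{a}^{n}<\infty$,'' which is precisely your substitution of $x+a$ for $x$ in Theorem \ref{thm M til L}. No differences to report.
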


\section{Drawdown times without recovery}

In this section, we focus on the drawdown times without recovery which are
more challenging to analyze than their counterparts with recovery.

Let $N_{t}^{a}=\sum\nolimits_{n=1}^{\infty}1_{\left\{  \tau_{a}^{n}\leq
t\right\}  }$ be the number of drawdowns without recovery by time $t\geq0$.
Clearly, $\left\{  N_{t}^{a},t\geq0\right\}  $ is a renewal process with
independent inter-drawdown times, all distributed as $\tau_{a}$. By Theorem
6.1.1 of Rolski et al. \cite{Rolskibook}, it follows that, with probability
one,%
\[
\lim_{t\to\infty}\frac{N_{t}^{a}}{t}=\frac{1}{\mathbb{E}\left[  \tau
_{a}\right]  }=\frac{2\mu^{2}}{\sigma^{2}\mathrm{e}^{2\mu a/\sigma^{2}}%
-\sigma^{2}-2\mu a}\text{,}%
\]
which is consistent with our intuition based on (\ref{iterated}). Here again,
one can also obtain some central limit theorems for $N_{t}^{a}$ by an
application of Theorem 6.1.2 of Rolski et al. \cite{Rolskibook}.

Next, we characterize the joint distribution of $\left(  \tau_{a}^{n}%
,X_{\tau_{a}^{n}}\right)  $ by deriving an explicit expression for
$\mathbb{E}[\mathrm{e}^{-\lambda\tau_{a}^{n}};X_{\tau_{a}^{n}}>x]$.

\begin{theorem}
\label{yyz}For $n\in%
\mathbb{N}
$ and $\lambda,x>0$, the joint distribution of $\left(  \tau_{a}^{n}%
,X_{\tau_{a}^{n}}\right)  $ satisfies
\begin{equation}
\mathbb{E}[\mathrm{e}^{-\lambda\tau_{a}^{n}};X_{\tau_{a}^{n}}>x]=\left(
\frac{c_{\lambda}}{b_{\lambda}}\right)  ^{n}\mathrm{e}^{-b_{\lambda}%
(x+na)}\sum_{m=0}^{n-1}\frac{\left(  b_{\lambda}(x+na)\right)  ^{m}}{m!}.
\label{L tau x}%
\end{equation}

\end{theorem}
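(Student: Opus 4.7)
The plan is to exploit the iterated stopping time structure \eqref{iterated} together with the strong Markov property and the spatial homogeneity of $X$ in order to decompose $\tau_a^{n}$ and $X_{\tau_a^{n}}+na$ into sums of $n$ independent copies of $\tau_a$ and $M_{\tau_a}$ respectively, and then to read off the joint distribution from Lemma \ref{lem 1} by the same kind of tail inversion used in the proof of Theorem \ref{thm M til L}.

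The first step is the reduction. For each $n\ge 1$, I would introduce the post-$\tau_a^{n-1}$ process $Y^{(n)}_t := X_{\tau_a^{n-1}+t}-X_{\tau_a^{n-1}}$, $t\ge 0$. Since $\tau_a^{n-1}$ is a.s.\ finite (inductively, using \eqref{lap} and \eqref{bc}), the strong Markov property together with the stationary independent increments of $X$ shows that $Y^{(n)}$ is a drifted Brownian motion starting from $0$ that is independent of $\mathcal{F}_{\tau_a^{n-1}}$. The reset built into \eqref{tau} then identifies $\tau_a^{(n)}:=\tau_a^{n}-\tau_a^{n-1}$ with the first drawdown time of $Y^{(n)}$ and yields $X_{\tau_a^{n}}-X_{\tau_a^{n-1}} = M^{Y^{(n)}}_{\tau_a^{(n)}}-a$, where $M^{Y^{(n)}}$ denotes the running maximum of $Y^{(n)}$. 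Writing $M^{(n)}:=M^{Y^{(n)}}_{\tau_a^{(n)}}$, the pairs $\{(\tau_a^{(i)},M^{(i)})\}_{i=1}^{n}$ are i.i.d.\ copies of $(\tau_a,M_{\tau_a})$, and telescoping (with $X_0=0$) gives
\[
\tau_a^{n}=\sum_{i=1}^{n}\tau_a^{(i)}\qquad\text{and}\qquad X_{\tau_a^{n}}+na=\sum_{i=1}^{n}M^{(i)}.
\]

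The second step is the joint Laplace transform computation. By independence of the $n$ pairs and Lemma \ref{lem 1},
\[
\mathbb{E}\!\left[\mathrm{e}^{-\lambda\tau_a^{n}-s(X_{\tau_a^{n}}+na)}\right]=\left(\frac{c_\lambda}{b_\lambda+s}\right)^{n}=\left(\frac{c_\lambda}{b_\lambda}\right)^{n}\!\left(\frac{b_\lambda}{b_\lambda+s}\right)^{n},
\]
and the rightmost factor is the Laplace transform of an Erlang$(n,b_\lambda)$ density. By uniqueness of Laplace transforms, the sub-probability measure $\mathbb{E}[\mathrm{e}^{-\lambda\tau_a^{n}};X_{\tau_a^{n}}+na\in dy]$ equals $(c_\lambda/b_\lambda)^{n}$ times that Erlang density. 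Taking tails at $y=x+na$, and using $\{X_{\tau_a^{n}}>x\}=\{X_{\tau_a^{n}}+na>x+na\}$, directly produces \eqref{L tau x}.

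The main obstacle is the careful justification of the i.i.d.\ decomposition in the first step: because the running maximum inside \eqref{tau} is restricted to $[\tau_a^{n-1},t]$, the pair $(\tau_a^{(n)},M^{(n)})$ is a function only of the increments of $X$ after $\tau_a^{n-1}$, which is precisely what lets the strong Markov property and spatial homogeneity upgrade it to a fresh independent copy of $(\tau_a,M_{\tau_a})$. Once this renewal structure is pinned down, the rest is a routine invocation of Lemma \ref{lem 1} and of the Erlang tail identity already used in Theorem \ref{thm M til L}.
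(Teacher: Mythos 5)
Your proposal is correct and follows essentially the same route as the paper: the paper also computes $\mathbb{E}[\mathrm{e}^{-\lambda\tau_a^{n}-s(X_{\tau_a^{n}}+na)}]=\left(c_\lambda/(b_\lambda+s)\right)^{n}$ via the strong Markov property (phrased as a one-step recursion on the first drawdown rather than your explicit i.i.d.\ telescoping, and using $X_{\tau_a}+a=M_{\tau_a}$ a.s.\ exactly as you do), and then inverts in $s$ to the Erlang density and integrates the tail. The only difference is presentational.
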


\begin{proof}
Given that $X_{\tau_{a}^{n}}+na$ is a positive rv (and $X_{\tau_{a}^{n}}$ is
not), we prove (\ref{L tau x}) by first deriving an expression for the joint
Laplace transform of $\left(  \tau_{a}^{n},X_{\tau_{a}^{n}}+na\right)  $. By
conditioning on the first drawdown time and its associated value process, and
by making use of the strong Markov property and (\ref{JL}), it is clear that
for all $s\ge0$,
\begin{align}
\mathbb{E}\left[  \mathrm{e}^{-\lambda\tau_{a}^{n}-s\left(  X_{\tau_{a}^{n}%
}+na\right)  }\right]   &  =\mathbb{E}\left[  \mathrm{e}^{-\lambda\tau
_{a}-s\left(  X_{\tau_{a}}+a\right)  }\right]  \mathbb{E}\left[
\mathrm{e}^{-\lambda\tau_{a}^{n-1}-s\left(  X_{\tau_{a}^{n-1}}+\left(
n-1\right)  a\right)  }\right] \nonumber\\
&  =\mathbb{E}\left[  \mathrm{e}^{-\lambda\tau_{a}-sM_{\tau_{a}}}\right]
\mathbb{E}\left[  \mathrm{e}^{-\lambda\tau_{a}^{n-1}-s\left(  X_{\tau
_{a}^{n-1}}+\left(  n-1\right)  a\right)  }\right] \nonumber\\
&  =\frac{c_{\lambda}}{b_{\lambda}+s}\mathbb{E}\left[  \mathrm{e}%
^{-\lambda\tau_{a}^{n-1}-s\left(  X_{\tau_{a}^{n-1}}+\left(  n-1\right)
a\right)  }\right] \nonumber\\
&  =\left(  \frac{c_{\lambda}}{b_{\lambda}+s}\right)  ^{n}\text{.} \label{aaa}%
\end{align}
The Laplace transform inversion of (\ref{aaa}) with respect to $s$ results in%
\begin{equation}
\mathbb{E}\left[  \mathrm{e}^{-\lambda\tau_{a}^{n}};\left(  X_{\tau_{a}^{n}%
}+na\right)  \in\mathrm{d} y\right]  =\left(  c_{\lambda}\right)  ^{n}%
\frac{y^{n-1}\mathrm{e}^{-b_{\lambda}y}}{\left(  n-1\right)  !}\mathrm{d}
y\text{,} \label{aaa1}%
\end{equation}
for $y\geq0$. Integrating (\ref{aaa1}) over $y$ from $x+na$ to $\infty$ yields
(\ref{L tau x}).\bigskip
\end{proof}

Letting $s\rightarrow0+$ in (\ref{aaa}), it follows that
\begin{equation}
\mathbb{E}[\mathrm{e}^{-\lambda\tau_{a}^{n}}]=\left(  c_{\lambda}/b_{\lambda
}\right)  ^{n}=\left(  \mathbb{E}[\mathrm{e}^{-\lambda\tau_{a}}]\right)  ^{n}.
\label{L tau}%
\end{equation}
Note that (\ref{L tau}) and (\ref{bc}) implies that%
\[
\mathbb{P}\left\{  \tau_{a}^{n}<\infty\right\}  =1.
\]
It is worth pointing out that the relation $\mathbb{E}\left[  \mathrm{e}%
^{-\lambda\tau_{a}^{n}}\right]  =\left(  \mathbb{E}\left[  \mathrm{e}%
^{-\lambda\tau_{a}}\right]  \right)  ^{n}$ holds more generally for $X$ a
general L\'{e}vy process or a renewal risk process (also known as the Sparre
Andersen risk model \cite{AndersenRiskmodel}) given that the inter-drawdown
times $\tau_{a}^{1}$, and $\left\{  \tau_{a}^{n}-\tau_{a}^{n-1}\right\}
_{n\geq2}$\ form a sequence of i.i.d. rvs.

Similarly, letting $\lambda\rightarrow0+$ in (\ref{L tau x}), it follows that
\begin{equation}
\mathbb{P}\left\{  X_{\tau_{a}^{n}}\geq x\right\}  =\mathrm{e}^{-\frac
{\gamma(x+na)}{\mathrm{e}^{\gamma a}-1}}\sum_{m=0}^{n-1}\frac{\left(
\frac{\gamma(x+na)}{\mathrm{e}^{\gamma a}-1}\right)  ^{m}}{m!},
\label{X tau P}%
\end{equation}
for $n\in%
\mathbb{N}
$ and $x\geq-na$. As expected, (\ref{X tau P}) is the survival function of an
Erlang rv with mean $n\left(  \mathrm{e}^{\gamma a}-1\right)  /\gamma$ and
variance $n\left(  \left(  \mathrm{e}^{\gamma a}-1\right)  /\gamma\right)
^{2}$, later translated by $-na$ units.

Our objective is now to include $M_{\tau_{a}^{n}}$ in the analysis of the
$n$-th drawdown time. A result particularly useful to do so is provided in
Lemma \ref{constLT} which consider a specific constrained Laplace transform of
the first passage time to level $x$.

\begin{lemma}
\label{constLT}For $n\in%
\mathbb{N}
$ and $x>0$, the constrained Laplace transform of $T_{x}^{+}$ together with
this first passage time occurring before $\tau_{a}^{n}$ is given by%
\begin{equation}
\mathbb{E}\left[  \mathrm{e}^{-\lambda T_{x}^{+}};T_{x}^{+}<\tau_{a}%
^{n}\right]  =\mathrm{e}^{-b_{\lambda}x}\sum_{j=0}^{n-1}\left(  c_{\lambda
}\mathrm{e}^{-b_{\lambda}a}\right)  ^{j}\frac{x(x+ja)^{j-1}}{j!}\text{.}
\label{cLT}%
\end{equation}

\end{lemma}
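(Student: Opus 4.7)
The plan is to prove (\ref{cLT}) by induction on $n$, writing $H_n(x) := \mathbb{E}[\mathrm{e}^{-\lambda T_x^+}; T_x^+ < \tau_a^n]$. For the base case $n = 1$, I would use the identity $\{T_x^+ < \tau_a\} = \{M_{\tau_a} > x\}$ (valid since $X_0 = 0 < x$) together with the strong Markov property at $T_x^+$: on this event $X_{T_x^+} = M_{T_x^+} = x$, so the drawdown process is reset to zero and the residual time until drawdown is independent of $\mathcal{F}_{T_x^+}$ and distributed as $\tau_a$. Consequently
\[
\mathbb{E}[\mathrm{e}^{-\lambda \tau_a}; M_{\tau_a} > x] = H_1(x) \cdot \mathbb{E}[\mathrm{e}^{-\lambda \tau_a}],
\]
and comparing with (\ref{L1}) and (\ref{lap}) yields $H_1(x) = \mathrm{e}^{-b_\lambda x}$, which is the $j = 0$ summand on the right-hand side of (\ref{cLT}).

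For the inductive step with $n \geq 2$, I would decompose at the first drawdown time,
\[
H_n(x) = H_1(x) + \mathbb{E}[\mathrm{e}^{-\lambda T_x^+}; \tau_a \leq T_x^+ < \tau_a^n],
\]
and note that on $\{\tau_a \leq T_x^+\}$ one has $X_{\tau_a} = M_{\tau_a} - a \leq x - a$. Strong Markov at $\tau_a$, spatial homogeneity, and the iterated-stopping-time representation (\ref{iterated}) then give
\[
\mathbb{E}[\mathrm{e}^{-\lambda T_x^+}; \tau_a \leq T_x^+ < \tau_a^n] = \int_{-a}^{x-a} c_\lambda \mathrm{e}^{-b_\lambda(v+a)} H_{n-1}(x-v) \, dv,
\]
where the density $\mathbb{E}[\mathrm{e}^{-\lambda \tau_a}; X_{\tau_a} \in dv] = c_\lambda \mathrm{e}^{-b_\lambda(v+a)} dv$ on $(-a, \infty)$ follows from Laplace-inverting (\ref{JL}) in $s$ together with $X_{\tau_a} = M_{\tau_a} - a$ a.s.

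Substituting the inductive formula for $H_{n-1}$, changing variable $u = x - v$, and evaluating term-by-term reduces matters to the antiderivative identity $\int u(u+ka)^{k-1}\, du = (u+ka)^k(u-a)/(k+1)$ (verifiable via $w = u + ka$), which gives $\int_a^{x+a} u(u+ka)^{k-1}\, du = x(x+(k+1)a)^k/(k+1)$ for $k = 0, \ldots, n-2$. After reindexing $j = k+1$ and combining with $H_1(x)$, this recovers the target formula. I expect the main obstacle to be this antiderivative calculation, an Abel-type evaluation which is precisely what produces the signature combinatorial factor $x(x+ja)^{j-1}/j!$. A conceptually cleaner alternative would be to partition $\{T_x^+ < \tau_a^n\}$ by the number $j$ of drawdowns preceding $T_x^+$, apply strong Markov at $\tau_a^j$, and use the excursion decomposition $X_{\tau_a^j} = \sum_{i=1}^{j}(Y_i - a)$ with $Y_i$ i.i.d. as $M_{\tau_a}$; the same identity then reappears as a volume computation for a region bounded by partial-sum constraints $\sum_{k\leq i} Y_k < x + (i-1)a$.
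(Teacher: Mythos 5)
Your proposal is correct and follows essentially the same route as the paper's proof: induction on $n$, with the inductive step obtained by conditioning on the first drawdown time and the law of $M_{\tau_a}$ (equivalently $X_{\tau_a}$), applying the strong Markov property, and evaluating the resulting Abel-type integral $\int_a^{x+a}u(u+ka)^{k-1}\,\mathrm{d}u=x(x+(k+1)a)^k/(k+1)$. The only departure is in the base case, where you obtain $H_1(x)=\mathrm{e}^{-b_\lambda x}$ from the factorization $\mathbb{E}[\mathrm{e}^{-\lambda\tau_a};M_{\tau_a}>x]=H_1(x)\,\mathbb{E}[\mathrm{e}^{-\lambda\tau_a}]$ rather than, as the paper does, by subtracting $\mathbb{E}[\mathrm{e}^{-\lambda T_x^+};T_x^+>\tau_a]$ from $\mathbb{E}[\mathrm{e}^{-\lambda T_x^+}]$ and invoking the identity $c_\lambda\mathrm{e}^{-\beta_\lambda^+a}=b_\lambda-\beta_\lambda^+$; both are valid.
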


\begin{proof}
We prove this result by induction on $n$. For $n=1$, we have%
\begin{align*}
\mathbb{E}\left[  \mathrm{e}^{-\lambda T_{x}^{+}};T_{x}^{+}<\tau_{a}%
^{1}\right]   &  =\mathbb{E}\left[  e^{-\lambda T_{x}^{+}}\right]
-\mathbb{E}\left[  e^{-\lambda T_{x}^{+}};T_{x}^{+}>\tau_{a}^{1}\right] \\
&  =\mathrm{e}^{-\beta_{\lambda}^{+}x}-\int_{0}^{x}\mathbb{E}\left[
\mathrm{e}^{-\lambda\tau_{a}^{1}}; M_{\tau_{a}^{1}}\in\mathrm{d} y\right]
\,\mathbb{E}_{y-a}\left[  \mathrm{e}^{-\lambda T_{x}^{+}}\right] \\
&  =\mathrm{e}^{-\beta_{\lambda}^{+}x}-\int_{0}^{x}c_{\lambda}\mathrm{e}%
^{-b_{\lambda}y}\,\mathrm{e}^{-\beta_{\lambda}^{+}\left(  x-y+a\right)
}\mathrm{d} y\\
&  =\mathrm{e}^{-\beta_{\lambda}^{+}x}-c_{\lambda}\mathrm{e}^{-\beta_{\lambda
}^{+}a}\frac{\mathrm{e}^{-\beta_{\lambda}^{+}x}-\mathrm{e}^{-b_{\lambda}x}%
}{b_{\lambda}-\beta_{\lambda}^{+}}\text{,}%
\end{align*}
where we used \eqref{L1} in the third equality.

On the other hand, using the fact that $c_{\lambda}\mathrm{e}^{-\beta
_{\lambda}^{+}a}=b_{\lambda}-\beta_{\lambda}^{+}$, we have
\[
\mathbb{E}\left[  \mathrm{e}^{-\lambda T_{x}^{+}};T_{x}^{+}<\tau_{a}%
^{1}\right]  =\mathrm{e}^{-b_{\lambda}x}\text{.}%
\]

We now assume that (\ref{cLT}) holds for $n=1,2,...,k-1$ and shows that
(\ref{cLT})\ also holds for $n=k$. Indeed, by the total probability formula,%
\begin{align}
\mathbb{E}\left[  \mathrm{e}^{-\lambda T_{x}^{+}};T_{x}^{+}<\tau_{a}%
^{k}\right]   &  =\mathbb{E}\left[  \mathrm{e}^{-\lambda T_{x}^{+}};T_{x}%
^{+}<\tau_{a}^{1}\right]  +\mathbb{E}\left[  \mathrm{e}^{-\lambda T_{x}^{+}%
};\tau_{a}^{1}<T_{x}^{+}<\tau_{a}^{k}\right] \nonumber\\
&  =\mathrm{e}^{-b_{\lambda}x}+\int_{0}^{x}\mathbb{E}\left[  \mathrm{e}%
^{-\lambda\tau_{a}};M_{\tau_{a}}\in\mathrm{d}y\right]  \mathbb{E}_{y-a}\left[
\mathrm{e}^{-\lambda T_{x}^{+}};T_{x}^{+}<\tau_{a}^{k-1}\right]
\mathrm{d}y\nonumber\\
&  =\mathrm{e}^{-b_{\lambda}x}+\int_{0}^{x}c_{\lambda}\mathrm{e}^{-b_{\lambda
}y}\,\mathbb{E}\left[  \mathrm{e}^{-\lambda T_{x-y+a}^{+}};T_{x-y+a}^{+}%
<\tau_{a}^{k-1}\right]  \mathrm{d}y\text{.} \label{parta}%
\end{align}
Substituting (\ref{cLT}) at $n=k-1$\ into (\ref{parta}) yields%
\begin{align*}
&  \mathbb{E}\left[  \mathrm{e}^{-\lambda T_{x}^{+}};T_{x}^{+}<\tau_{a}%
^{k}\right] \\
&  =\mathrm{e}^{-b_{\lambda}x}+c_{\lambda}\mathrm{e}^{-b_{\lambda}\left(
x+a\right)  }\sum_{j=0}^{k-2}\int_{0}^{x}\left(  c_{\lambda}\mathrm{e}%
^{-b_{\lambda}a}\right)  ^{j}\frac{\left(  x-y+a\right)  (x-y+\left(
j+1\right)  a)^{j-1}}{j!}\mathrm{d}y\\
&  =\mathrm{e}^{-b_{\lambda}x}+c_{\lambda}\mathrm{e}^{-b_{\lambda}\left(
x+a\right)  }\left(  x+\sum_{j=1}^{k-2}\left(  c_{\lambda}\mathrm{e}%
^{-b_{\lambda}a}\right)  ^{j}\int_{0}^{x}\left(  \frac{\left(  y+\left(
j+1\right)  a\right)  ^{j}}{j!}-a\frac{\left(  y+\left(  j+1\right)  a\right)
^{j-1}}{\left(  j-1\right)  !}\right)  \mathrm{d}y\right) \\
&  =\mathrm{e}^{-b_{\lambda}x}\left(  1+c_{\lambda}\mathrm{e}^{-b_{\lambda}%
a}x+\sum_{j=2}^{k-1}\left(  c_{\lambda}\mathrm{e}^{-b_{\lambda}a}\right)
^{j}\frac{x\left(  x+ja\right)  ^{j-1}}{j!}\right) \\
&  =\mathrm{e}^{-b_{\lambda}x}\sum_{j=0}^{k-1}\left(  c_{\lambda}%
\mathrm{e}^{-b_{\lambda}a}\right)  ^{j}\frac{x(x+ja)^{j-1}}{j!}\text{.}%
\end{align*}
This completes the proof.\bigskip
\end{proof}

In the next theorem, we provide a distributional characterization of the
$n$-th drawdown time $\tau_{a}^{n}$ with respect to both $M_{\tau_{a}^{n}}$
and $X_{\tau_{a}^{n}}$.

\begin{theorem}
\label{jointd}For $n\in%
\mathbb{N}
$ and $x>0$, we have
\begin{align}
&  \mathbb{E}\left[  \mathrm{e}^{-\lambda\tau_{a}^{n}};M_{\tau_{a}^{n}%
}>x,X_{\tau_{a}^{n}}\in\mathrm{d}y\right] \nonumber\\
&  =\left(  c_{\lambda}\right)  ^{n}\mathrm{e}^{-b_{\lambda}(y+na)}\sum
_{m=0}^{n-1}\frac{x(x+ma)^{m-1}(y-x+(n-m)a))^{n-1-m}\mathrm{1}_{\left\{
y-x+(n-m)a\geq0\right\}  }}{m!(n-m-1)!}\mathrm{d}y\text{.} \label{JJ}%
\end{align}

\end{theorem}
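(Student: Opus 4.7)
The plan is to reduce the computation of the joint distribution of $(\tau_{a}^{n}, M_{\tau_{a}^{n}}, X_{\tau_{a}^{n}})$ to quantities already derived, by applying the strong Markov property at the first passage time $T_{x}^{+}$. The crucial pathwise observation is that, since $X$ is continuous, $\{M_{\tau_{a}^{n}} > x\} = \{T_{x}^{+} < \tau_{a}^{n}\}$. On this event I record $m$, the number of drawdown epochs $\tau_{a}^{1}, \ldots, \tau_{a}^{m}$ occurring before $T_{x}^{+}$, where $m$ ranges over $\{0, 1, \ldots, n-1\}$ with the convention $\tau_{a}^{0} := 0$. This yields the disjoint decomposition
\begin{equation*}
\mathbb{E}\!\left[\mathrm{e}^{-\lambda \tau_{a}^{n}}; M_{\tau_{a}^{n}} > x, X_{\tau_{a}^{n}} \in \mathrm{d}y\right] = \sum_{m=0}^{n-1} \mathbb{E}\!\left[\mathrm{e}^{-\lambda \tau_{a}^{n}}; \tau_{a}^{m} \le T_{x}^{+} < \tau_{a}^{m+1}, X_{\tau_{a}^{n}} \in \mathrm{d}y\right].
\end{equation*}

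For each $m$ I would apply the strong Markov property at $T_{x}^{+}$. The key technical point is that on $\{\tau_{a}^{m} \le T_{x}^{+} < \tau_{a}^{m+1}\}$, path continuity forces $X_{T_{x}^{+}} = x$ and $\sup_{t \in [\tau_{a}^{m}, T_{x}^{+}]} X_{t} = x$, so the running maximum since the last drawdown checkpoint $\tau_{a}^{m}$ coincides with the current level $x$. Hence from $T_{x}^{+}$ onward the drawdown mechanism effectively restarts from the fresh level $x$, and the remaining epochs $\tau_{a}^{m+1}, \ldots, \tau_{a}^{n}$ of $X$ are distributed as the first $n-m$ drawdown times of an independent drifted Brownian motion $X'$ started at $x$. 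This gives the factorization
\begin{equation*}
\mathbb{E}\!\left[\mathrm{e}^{-\lambda \tau_{a}^{n}}; \tau_{a}^{m} \le T_{x}^{+} < \tau_{a}^{m+1}, X_{\tau_{a}^{n}} \in \mathrm{d}y\right] = \mathbb{E}\!\left[\mathrm{e}^{-\lambda T_{x}^{+}}; \tau_{a}^{m} \le T_{x}^{+} < \tau_{a}^{m+1}\right] \cdot \mathbb{E}\!\left[\mathrm{e}^{-\lambda \tau_{a}^{n-m}}; X_{\tau_{a}^{n-m}} \in \mathrm{d}(y-x)\right].
\end{equation*}

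Each factor is available from the preceding results. The first factor is the telescoping difference $\mathbb{E}[\mathrm{e}^{-\lambda T_{x}^{+}}; T_{x}^{+} < \tau_{a}^{m+1}] - \mathbb{E}[\mathrm{e}^{-\lambda T_{x}^{+}}; T_{x}^{+} < \tau_{a}^{m}]$, and applying Lemma \ref{constLT} to each term lets all summands with $j \le m-1$ cancel, leaving the single contribution $\mathrm{e}^{-b_{\lambda} x}(c_{\lambda}\mathrm{e}^{-b_{\lambda} a})^{m} x(x+ma)^{m-1}/m!$. The second factor follows from (\ref{aaa1}) in the proof of Theorem \ref{yyz} with $n$ replaced by $n-m$ and the substitution $u = y - x + (n-m)a$, yielding $(c_{\lambda})^{n-m}(y-x+(n-m)a)^{n-m-1}\mathrm{e}^{-b_{\lambda}(y-x+(n-m)a)}\mathbf{1}_{\{y-x+(n-m)a\ge 0\}}/(n-m-1)!\,\mathrm{d}y$. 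Multiplying, the exponents collapse via $\mathrm{e}^{-b_{\lambda} x}\mathrm{e}^{-b_{\lambda} m a}\mathrm{e}^{-b_{\lambda}(y-x+(n-m)a)} = \mathrm{e}^{-b_{\lambda}(y+na)}$, and the resulting $m$-th term matches precisely the $m$-th summand in the right-hand side of (\ref{JJ}).

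The main obstacle will be justifying the strong Markov factorization rigorously: in particular the identity $M_{[\tau_{a}^{m}, T_{x}^{+}]} = X_{T_{x}^{+}} = x$ on $\{\tau_{a}^{m} \le T_{x}^{+}\}$ is what guarantees that the post-$T_{x}^{+}$ process truly restarts with a fresh running maximum, thereby aligning the index shift from $\tau_{a}^{m+1}, \ldots, \tau_{a}^{n}$ with the first $n-m$ drawdowns of the shifted process. Beyond this, I would carefully handle the boundary case $m=0$ (where the convention $x(x+0)^{-1}/0! = 1$ must be adopted, consistent with the $n=1$ computation in Lemma \ref{constLT}) and keep track of the indicator $\mathbf{1}_{\{y-x+(n-m)a\ge 0\}}$ in each summand, which encodes the minimal physically accessible value of $X_{\tau_{a}^{n}}$ after $n-m$ further drawdowns from level $x$.
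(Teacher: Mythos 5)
Your proposal is correct and follows essentially the same route as the paper's proof: decompose over the index $m$ of the drawdown episode containing $T_{x}^{+}$, apply the strong Markov property at $T_{x}^{+}$, evaluate the first factor as a telescoping difference of the constrained Laplace transforms from Lemma \ref{constLT}, and evaluate the second factor via the density (\ref{aaa1}) from Theorem \ref{yyz} shifted to start at level $x$. The additional care you flag (the identity $M_{[\tau_{a}^{m},T_{x}^{+}]}=x$, the $m=0$ convention, and the indicator) is consistent with, and slightly more explicit than, the paper's argument.
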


\begin{proof}
By conditioning on the drawdown episode during which the drifted Brownian
motion process $X$ reaches level $x$ for the first time and subsequently using
the strong Markov property, we have
\begin{align}
&  \mathbb{E}\left[  \mathrm{e}^{-\lambda\tau_{a}^{n}};M_{\tau_{a}^{n}%
}>x,X_{\tau_{a}^{n}}\in\mathrm{d}y\right] \nonumber\\
&  =\sum_{m=0}^{n-1}\mathbb{E}\left[  \mathrm{e}^{-\lambda\tau_{a}^{n}%
};M_{\tau_{a}^{n}}>x,X_{\tau_{a}^{n}}\in\mathrm{d}y,\tau_{a}^{m}<T_{x}%
^{+}<\tau_{a}^{m+1}\right] \nonumber\\
&  =\sum_{m=0}^{n-1}\mathbb{E}\left[  \mathrm{e}^{-\lambda T_{x}^{+}};\tau
_{a}^{m}<T_{x}^{+}<\tau_{a}^{m+1}\right]  \mathbb{E}_{x}\left[  \mathrm{e}%
^{-\lambda\tau_{a}^{n-m}};X_{\tau_{a}^{n-m}}\in\mathrm{d}y\right]  \label{J1}%
\end{align}
From Lemma \ref{constLT}, we know that%
\begin{align}
\mathbb{E}\left[  \mathrm{e}^{-\lambda T_{x}^{+}};\tau_{a}^{m}<T_{x}^{+}%
<\tau_{a}^{m+1}\right]   &  =\mathbb{E}\left[  \mathrm{e}^{-\lambda T_{x}^{+}%
};\tau_{a}^{m}<T_{x}^{+}\right]  -\mathbb{E}\left[  \mathrm{e}^{-\lambda
T_{x}^{+}};\tau_{a}^{m+1}<T_{x}^{+}\right] \nonumber\\
&  =\left(  c_{\lambda}\right)  ^{m}\frac{x(x+ma)^{m-1}}{m!}\mathrm{e}%
^{-b_{\lambda}\left(  x+ma\right)  }\text{.} \label{J2}%
\end{align}
By Theorem \ref{yyz}, we have
\begin{align}
&  \mathbb{E}_{x}\left[  \mathrm{e}^{-\lambda\tau_{a}^{n-m}};X_{\tau_{a}%
^{n-m}}\in\mathrm{d}y\right] \nonumber\\
&  =\frac{\left(  c_{\lambda}\right)  ^{n-m}(y-x+(n-m)a)^{n-m-1}%
\mathrm{e}^{-b_{\lambda}(y-x+(n-m)a)}1_{\left\{  y-x+(n-m)a\geq0\right\}  }%
}{(n-m-1)!}\mathrm{d}y. \label{J3}%
\end{align}
Substituting (\ref{J2}) and (\ref{J3}) into (\ref{J1}) and simplifying, one
easily obtains (\ref{JJ}).\bigskip
\end{proof}

Recall that $\tau_{a}^{1}=\tilde{\tau}_{a}^{1}=\tau_{a}$ and $X_{\tau_{a}%
}=M_{\tau_{a}}-a$ a.s.. Therefore, by letting $\lambda\rightarrow0+$ and $x=a$
in (\ref{J2}), it follows that, for $m=0,1,2,\cdots$,%
\begin{align}
\mathbb{P}\left\{  \tilde{\tau}_{a}^{2}=\tau_{a}^{2+m}\right\}   &
=\mathbb{P}\{\tau_{a}^{m}<T_{a}^{+}<\tau_{a}^{m+1}\}\nonumber\\
&  =\frac{(m+1)^{m-1}}{m!}\left(  \frac{\gamma a}{\mathrm{e}^{\gamma a}%
-1}\right)  ^{m}\mathrm{e}^{-\frac{\left(  m+1\right)  \gamma a}%
{\mathrm{e}^{\gamma a}-1}}, \label{eq:f2}%
\end{align}
which is the probability mass function of a generalized Poisson rv (see, e.g.,
Equation (9.1) of Consul and Famoye \cite{Lagrangian2006} with $\theta
=\lambda=\gamma a/(\mathrm{e}^{\gamma a}-1)$). For completeness, a rv $Y$ has
a generalized Poisson$\left(  \theta,\lambda\right)  $ distribution if its
probability mass function $p_{Y}$ is given by
\[
p_{Y}\left(  m\right)  =\frac{\theta\left(  \theta+\lambda m\right)
^{m-1}e^{-\theta-\lambda m}}{m!}\text{,\qquad}m=0,1,2,...\text{,}%
\]
when both $\theta,\lambda>0$.

Note that a generalization of (\ref{eq:f2}) will be proposed in Theorem
\ref{thm tt}.

\begin{remark}
\label{rk dd}Equation (\ref{eq:f2}) can be interpreted as follows: the number
of drawdowns \textbf{without} recovery between two successive drawdowns with
recovery follows a generalized Poisson distribution with $\theta
=\lambda=\gamma a/(\mathrm{e}^{\gamma a}-1)$.
\end{remark}

The following result connecting the two drawdown time sequences is provided.
It should be noted that the rv $N_{\tilde{\tau}_{a}^{k}}^{a}-k$ represents the
number of drawdowns without recovery over the first $k$ drawdowns with
recovery. When $k=2$, (\ref{allo}) coincides with (\ref{eq:f2}).

\begin{theorem}
\label{thm tt}For any $k\in%
\mathbb{N}
$, $N_{\tilde{\tau}_{a}^{k}}^{a}-k$ follows a generalized Poisson distribution
with parameters $\theta=(k-1)\gamma a/($\textrm{$e$}$^{\gamma a}-1)$ and
$\lambda=\gamma a/(\mathrm{e}^{\gamma a}-1)$, i.e., for $m=0,1,2,\ldots,$ we
have%
\begin{equation}
\mathbb{P}\left\{  \tilde{\tau}_{a}^{k}=\tau_{a}^{k+m}\right\}  =\mathbb{P}%
\left\{  N_{\tilde{\tau}_{a}^{k}}^{a}=k+m\right\}  =\frac{k-1}{m+k-1}%
\frac{\left(  \frac{\left(  m+k-1\right)  \gamma a}{\mathrm{e}^{\gamma a}%
-1}\right)  ^{m}}{m!}\mathrm{e}^{-\frac{\left(  m+k-1\right)  \gamma
a}{\mathrm{e}^{\gamma a}-1}}. \label{allo}%
\end{equation}

\end{theorem}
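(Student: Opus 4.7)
The plan is to express $N_{\tilde{\tau}_a^k}^a - k$ as a sum of $k-1$ i.i.d.\ generalized Poisson random variables and then appeal to the convolution closure of the generalized Poisson family. Concretely, with the convention $\tilde{\tau}_a^0 := 0$ and $N_0^a := 0$, set
\[
M_j := N_{\tilde{\tau}_a^j}^a - N_{\tilde{\tau}_a^{j-1}}^a - 1, \qquad j = 1, 2, \ldots, k,
\]
so that $M_j$ counts the number of without-recovery drawdowns occurring strictly between the $(j-1)$-st and $j$-th with-recovery drawdowns. Since $\tilde{\tau}_a^1 = \tau_a^1$, we have $M_1 = 0$, and therefore
\[
N_{\tilde{\tau}_a^k}^a - k = \sum_{j=2}^k M_j.
\]

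The second step is to verify that $M_2, \ldots, M_k$ are i.i.d. At each $\tilde{\tau}_a^{j-1}$, the identity $M_{\tilde{\tau}_a^{j-1}} - X_{\tilde{\tau}_a^{j-1}} = a$ holds almost surely on $\{\tilde{\tau}_a^{j-1} < \infty\}$, so the offset between the running maximum and the current position is a deterministic constant at this stopping time. The strong Markov property applied at $\tilde{\tau}_a^{j-1}$ then implies that the post-shift process $X_{\tilde{\tau}_a^{j-1} + \cdot} - X_{\tilde{\tau}_a^{j-1}}$ is a drifted Brownian motion, independent of $\mathcal{F}_{\tilde{\tau}_a^{j-1}}$. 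Because both the global running maximum and each reset maximum $M_{[\tau_a^n,\cdot]}$ for $\tau_a^n$ between $\tilde{\tau}_a^{j-1}$ and $\tilde{\tau}_a^j$ are anchored to $X_{\tilde{\tau}_a^{j-1}}$, the quantity $M_j$ is a measurable functional of this post-shift process with precisely the same functional form as the one producing $M_2$ from the original process. Iterating this observation at each $\tilde{\tau}_a^{j-1}$ yields that $M_2, \ldots, M_k$ are i.i.d., with common law identified in Remark \ref{rk dd} as generalized Poisson$(\theta, \lambda)$ with $\theta = \lambda = \gamma a/(\mathrm{e}^{\gamma a} - 1)$.

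To conclude, I would invoke the convolution property of the generalized Poisson family, namely that the sum of $r$ i.i.d.\ generalized Poisson$(\theta, \lambda)$ variables is generalized Poisson$(r\theta, \lambda)$ (see, e.g., Consul and Famoye \cite{Lagrangian2006}). Taking $r = k-1$ shows that $N_{\tilde{\tau}_a^k}^a - k$ is generalized Poisson$((k-1)\theta, \lambda)$, whose probability mass function coincides with the right-hand side of (\ref{allo}); the boundary case $k = 1$ is handled by the empty sum convention. The main obstacle is the justification in step two that $M_j$ is a functional of the post-$\tilde{\tau}_a^{j-1}$ increments alone and that its law does not depend on $j$. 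This hinges crucially on the deterministic offset $X_{\tilde{\tau}_a^{j-1}} = M_{\tilde{\tau}_a^{j-1}} - a$, which makes the post-stopping dynamics translation-equivalent across the successive recovery cycles and allows one to transfer the base case (\ref{eq:f2}) to every $j \geq 2$.
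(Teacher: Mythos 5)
Your proposal is correct and follows essentially the same route as the paper's proof: the same telescoping decomposition $N_{\tilde{\tau}_a^k}^a-k=\sum_{j=2}^k\bigl(N_{\tilde{\tau}_a^j}^a-N_{\tilde{\tau}_a^{j-1}}^a-1\bigr)$, the same appeal to Remark \ref{rk dd} and the strong Markov property for the i.i.d.\ generalized Poisson increments, and the same convolution-closure result of Consul and Famoye. Your added justification of why the increments are identically distributed (the deterministic offset $M_{\tilde{\tau}_a^{j-1}}-X_{\tilde{\tau}_a^{j-1}}=a$ at each recovery time) is a welcome elaboration of a step the paper leaves implicit.
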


\begin{proof}
It is clear that $\left\{  \tilde{\tau}_{a}^{k}=\tau_{a}^{k+m}\right\}
\ $corresponds to the event that $m$ drawdowns without recovery will occur
over the first $k$ drawdowns with recovery, i.e.
\[
\left\{  \tilde{\tau}_{a}^{k}=\tau_{a}^{k+m}\right\}  =\left\{  N_{\tilde
{\tau}_{a}^{k}}^{a}=k+m\right\}  \text{.}%
\]
Next we prove $N_{\tilde{\tau}_{a}^{k}}^{a}-k$ follows a generalized Poisson
distribution. By Remark \ref{rk dd} and the strong Markov property of $X$, we
know that the numbers of drawdowns without recovery between any two successive
drawdowns with recovery are i.i.d. and follow a generalized Poisson
distribution with $\theta=\lambda=\gamma a/(\mathrm{e}^{\gamma a}-1)$. Thus,
\[
N_{\tilde{\tau}_{a}^{k}}^{a}-k=\sum_{i=2}^{k}\left(  N_{\tilde{\tau}_{a}^{i}%
}^{a}-N_{\tilde{\tau}_{a}^{i-1}}^{a}-1\right)  \text{,}%
\]
corresponds to a sum of i.i.d. rv's with a generalized Poisson distribution
$\theta=\lambda=\gamma a/(\mathrm{e}^{\gamma a}-1)$. Using Theorem 9.1 of
Consul and Famoye \cite{Lagrangian2006}, we have that $N_{\tilde{\tau}_{a}%
^{k}}^{a}-k$ follows a generalized Poisson distribution with parameters
$\theta=(k-1)\gamma a/($\textrm{$e$}$^{\gamma a}-1)$ and $\lambda=\gamma
a/(\mathrm{e}^{\gamma a}-1)$.\bigskip
\end{proof}

Next, we propose the following corollary which can be viewed as an extension
to Taylor \cite{Taylor75} and Lehoczky \cite{Lehoczky77} from the first
drawdown case to the $n$-th drawdown without recovery.

\begin{corollary}
\label{aab}For $n\in%
\mathbb{N}
$ and $x>0$, we have%
\[
\mathbb{E}\left[  \mathrm{e}^{-\lambda\tau_{a}^{n}};M_{\tau_{a}^{n}}>x\right]
=\left(  \frac{c_{\lambda}}{b_{\lambda}}\right)  ^{n}\sum_{m=0}^{n-1}%
\frac{x(x+ma)^{m-1}b_{\lambda}^{m}}{m!}\mathrm{e}^{-b_{\lambda}\left(
ma+x\right)  }.
\]

\end{corollary}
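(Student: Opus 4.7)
The plan is to obtain the marginal law of $M_{\tau_a^n}$ from the fully joint expression in Theorem \ref{jointd} by integrating out $y=X_{\tau_a^n}$. Concretely, I would start from
\begin{equation*}
\mathbb{E}\left[\mathrm{e}^{-\lambda\tau_a^n};M_{\tau_a^n}>x\right]=\int_{-\infty}^{\infty}\mathbb{E}\left[\mathrm{e}^{-\lambda\tau_a^n};M_{\tau_a^n}>x,X_{\tau_a^n}\in\mathrm{d}y\right],
\end{equation*}
plug in the density (\ref{JJ}), and exchange the finite sum with the integral. The effective range of integration for the $m$-th term is dictated by the indicator $\mathbf{1}_{\{y-x+(n-m)a\ge 0\}}$, so the computation reduces to evaluating, for each $m\in\{0,1,\ldots,n-1\}$,
\begin{equation*}
I_m:=\int_{x-(n-m)a}^{\infty}(y-x+(n-m)a)^{\,n-1-m}\mathrm{e}^{-b_\lambda(y+na)}\,\mathrm{d}y.
\end{equation*}

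Next I would perform the substitution $u=y-x+(n-m)a$, which turns $I_m$ into a standard Gamma integral. The key identity is that, at $u=0$, the exponent equals $-b_\lambda(x-(n-m)a+na)=-b_\lambda(x+ma)$, so
\begin{equation*}
I_m=\mathrm{e}^{-b_\lambda(x+ma)}\int_{0}^{\infty}u^{\,n-1-m}\mathrm{e}^{-b_\lambda u}\,\mathrm{d}u=\frac{(n-m-1)!}{b_\lambda^{\,n-m}}\,\mathrm{e}^{-b_\lambda(x+ma)}.
\end{equation*}
Substituting $I_m$ back into (\ref{JJ}), the factor $(n-m-1)!$ cancels the corresponding factor in the denominator, and pulling $b_\lambda^{-n}$ out of the sum produces the prefactor $(c_\lambda/b_\lambda)^n$ together with the remaining $b_\lambda^{m}$ in the numerator, yielding exactly the stated formula.

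The argument is essentially a bookkeeping exercise and I do not anticipate a genuine obstacle. The only step that demands care is the change of variable, where one must verify that the combined exponent collapses cleanly so that the $m$-th term carries precisely the factor $\mathrm{e}^{-b_\lambda(ma+x)}$ appearing in the statement, and that the power of $b_\lambda$ in the denominator matches the $b_\lambda^m$ in the numerator once $b_\lambda^{-n}$ is extracted. Since Theorem \ref{jointd} has already absorbed all the probabilistic content via the first-passage decomposition and Lemma \ref{constLT}, nothing further is needed beyond this computation.
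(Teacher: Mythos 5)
Your proposal is correct and is essentially identical to the paper's own proof: the paper likewise integrates (\ref{JJ}) over $y$, substitutes $z=y-x+(n-m)a$ so the exponent collapses to $-b_{\lambda}(z+x+ma)$, evaluates the resulting Gamma integral as $(n-m-1)!/b_{\lambda}^{n-m}$, and extracts $b_{\lambda}^{-n}$ to form the prefactor $(c_{\lambda}/b_{\lambda})^{n}$. No gaps.
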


\begin{proof}
Taking the integral of (\ref{JJ}) with respect to $y$ in $(-na,\infty)$, we
have%
\begin{align*}
\mathbb{E}\left[  \mathrm{e}^{-\lambda\tau_{a}^{n}};M_{\tau_{a}^{n}}>x\right]
&  =(c_{\lambda})^{n}\sum_{m=0}^{n-1}\frac{x(x+ma)^{m-1}}{m!(n-m-1)!}%
\int_{x-(n-m)a}^{\infty}\mathrm{e}^{-b_{\lambda}(y+na)}(y-x+(n-m)a)^{n-m-1}%
\mathrm{d}y\\
&  =(c_{\lambda})^{n}\sum_{m=0}^{n-1}\frac{x(x+ma)^{m-1}}{m!(n-m-1)!}\int
_{0}^{\infty}\mathrm{e}^{-b_{\lambda}(z+x+ma)}z^{n-m-1}\mathrm{d}z\\
&  =(c_{\lambda})^{n}\sum_{m=0}^{n-1}\frac{x(x+ma)^{m-1}}{m!(n-m-1)!}%
\mathrm{e}^{-b_{\lambda}(x+ma)}\int_{0}^{\infty}\mathrm{e}^{-b_{\lambda}%
z}z^{n-m-1}\mathrm{d}z\\
&  =(c_{\lambda})^{n}\sum_{m=0}^{n-1}\frac{x(x+ma)^{m-1}}{m!b_{\lambda}^{n-m}%
}\mathrm{e}^{-b_{\lambda}(x+ma)}.
\end{align*}
which completes the proof.\bigskip
\end{proof}

The marginal distribution of $M_{\tau_{a}^{n}}$ can easily be obtained from
Corollary \ref{aab} by letting $\lambda\rightarrow0+$ and subsequently making
use of (\ref{bc}). Indeed,
\begin{equation}
\mathbb{P}\left\{  M_{\tau_{a}^{n}}>x\right\}  =\sum_{m=0}^{n-1}%
\frac{x(x+ma)^{m-1}\left(  \frac{\gamma}{\mathrm{e}^{\gamma a}-1}\right)
^{m}}{m!}\mathrm{e}^{-\frac{\gamma(ma+x)}{\mathrm{e}^{\gamma a}-1}}\text{.}
\label{PP}%
\end{equation}
Rearrangements of (\ref{PP}) yields%
\begin{equation}
\mathbb{P}\left\{  M_{\tau_{a}^{n}}>x\right\}  =\sum_{k=0}^{n-1}D_{k,n}%
\frac{\left(  \frac{\gamma x}{\mathrm{e}^{\gamma a}-1}\right)  ^{k}}%
{k!}\mathrm{e}^{-\frac{\gamma x}{\mathrm{e}^{\gamma a}-1}}\text{,}
\label{survivalM}%
\end{equation}
where $D_{0,n}=1$, and%
\begin{equation}
D_{k,n}=\sum_{m=k}^{n-1}\frac{k\left(  \frac{m\gamma a}{\mathrm{e}^{\gamma
a}-1}\right)  ^{m-k}}{m\left(  m-k\right)  !}\mathrm{e}^{-\frac{m\gamma
}{\mathrm{e}^{\gamma a}-1}a}=\sum_{m=0}^{n-1-k}\frac{k\left(  \frac{\left(
m+k\right)  \gamma a}{\mathrm{e}^{\gamma a}-1}\right)  ^{m}}{(m+k)m!}%
\mathrm{e}^{-\frac{\left(  m+k\right)  \gamma a}{\mathrm{e}^{\gamma a}-1}%
}\text{,} \label{D1}%
\end{equation}
for $k=1,2,...,n-1$. Note that by substituting $k$ by $k+1$ in (\ref{allo}),
it follows that (\ref{D1}) can be rewritten as%
\[
D_{k,n}=\sum_{m=0}^{n-1-k}\mathbb{P}\left\{  \tilde{\tau}_{a}^{k+1}=\tau
_{a}^{k+1+m}\right\}  \text{,}%
\]
which is equivalent to
\[
D_{k,n}=\mathbb{P}\left\{  \tilde{\tau}_{a}^{k+1}\leq\tau_{a}^{n}\right\}
=\mathbb{P}\left\{  \tilde{N}_{\tau_{a}^{n}}^{a}>k\right\}  \text{.}%
\]

Then,
\[
\mathbb{P}\left\{  M_{\tau_{a}^{n}}\in\mathrm{d}y\right\}  =\sum_{k=1}%
^{n}d_{k,n}\frac{\left(  \frac{\gamma a}{\mathrm{e}^{\gamma a}-1}\right)
^{k}y^{k-1}e^{-\frac{\gamma a}{\mathrm{e}^{\gamma a}-1}y}}{\left(  k-1\right)
!}\mathrm{d}y\text{,}%
\]
where $\left\{  d_{k,n}\right\}  _{k=1}^{n}$ are given by%
\begin{align*}
d_{k,n}  &  \equiv D_{k-1,n}-D_{k,n}\\
&  =\sum_{j=k}^{n}\frac{k-1}{j-1}\frac{\left(  \frac{\left(  j-1\right)
\gamma a}{\mathrm{e}^{\gamma a}-1}\right)  ^{j-k}}{\left(  j-k\right)
!}\mathrm{e}^{-\frac{\left(  j-1\right)  \gamma a}{\mathrm{e}^{\gamma a}-1}%
}\left(  1-\sum_{m=0}^{n-j-1}\frac{(m+1)^{m-1}}{m!}\left(  \frac{\gamma
a}{\mathrm{e}^{\gamma a}-1}\right)  ^{m}\mathrm{e}^{-\frac{\left(  m+1\right)
\gamma a}{\mathrm{e}^{\gamma a}-1}}\right)  \text{.}%
\end{align*}
In conclusion, $M_{\tau_{a}^{n}}$ follows a mixed-Erlang distribution which is
an important class of distribution in risk management (see, e.g., Willmot and
Lin \cite{WillLin} for an extensive review of mixed Erlang distributions).

\begin{remark}
\label{mE}Note that the distribution of $M_{\tau_{a}^{n}}$ does not come as a
surprise. Indeed, one can obtain the structural form of the distribution of
$M_{\tau_{a}^{n}}$ by conditioning on $\tilde{N}_{\tau_{a}^{n}}^{a}$, namely
the number of drawdowns with recovery over the first $n$ drawdowns (without
recovery). Using the strong Markov property of the process $X$ and Equation
(\ref{M}), it follows that $M_{\tau_{a}^{n}}\left\vert \tilde{N}_{\tau_{a}%
^{n}}^{a}=m\right.  $ is an Erlang rv with mean $m\frac{\mathrm{e}^{\gamma
a}-1}{\gamma}$ and variance $m\left(  \frac{\mathrm{e}^{\gamma a}-1}{\gamma
}\right)  ^{2}$ for $m=1,2,...,n$. Thus, in (\ref{survivalM}), $D_{k,n}$ can
be interpreted as the survival function of $\tilde{N}_{\tau_{a}^{n}}^{a}$,
i.e.
\[
D_{k,n}=\mathbb{P}\left\{  \tilde{N}_{\tau_{a}^{n}}^{a}>k\right\}
=\mathbb{P}\left\{  \tilde{\tau}_{a}^{k+1}\leq\tau_{a}^{n}\right\}  .
\]

\end{remark}

The next corollary investigates the actual drawdown $M_{t}-X_{t}$ at
$t=\tau_{a}^{n}$.

\begin{corollary}
For $a\leq x\leq na$, we have
\begin{align*}
&  \mathbb{E}\left[  \mathrm{e}^{-\lambda\tau_{a}^{n}};M_{\tau_{a}^{n}%
}-X_{\tau_{a}^{n}}\leq x\right] \\
&  =(c_{\lambda})^{n}\mathrm{e}^{-b_{\lambda}(na-x)}\sum_{m=0}^{n-1}\left(
\frac{(na-x)^{m}}{b_{\lambda}^{n-m}m!}-\frac{\mathrm{1}_{\left\{
x\leq(n-m)a\right\}  }((n-m)a-x)^{n-m-1}\int_{0}^{\infty}\mathrm{e}%
^{-b_{\lambda}y}y(y+ma)^{m-1}\mathrm{d}y}{m!(n-m-1)!}\right)  .
\end{align*}

\end{corollary}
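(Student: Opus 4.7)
The plan is to decompose the target by complement, using the marginal law of $X_{\tau_a^n}$ from (\ref{aaa1}) and the joint measure from Theorem \ref{jointd}. Since $M_{\tau_a^n} \ge 0$ almost surely, only $y := X_{\tau_a^n} > -x$ can contribute to the event $\{M_{\tau_a^n} \le X_{\tau_a^n} + x\}$, so one writes
\[
\mathbb{E}\!\left[\mathrm{e}^{-\lambda \tau_a^n};\, M_{\tau_a^n} - X_{\tau_a^n} \le x\right] = J_1 - J_2,
\]
with
\[
J_1 := \int_{-x}^{\infty} \mathbb{E}\!\left[\mathrm{e}^{-\lambda \tau_a^n};\, X_{\tau_a^n} \in \mathrm{d}y\right], \qquad J_2 := \int_{-x}^{\infty} \mathbb{E}\!\left[\mathrm{e}^{-\lambda \tau_a^n};\, M_{\tau_a^n} > y+x,\, X_{\tau_a^n} \in \mathrm{d}y\right].
\]

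For $J_1$, I would substitute (\ref{aaa1}) and change variables $z = y + na$ to reduce to the upper incomplete Gamma integral $(c_\lambda)^n/(n-1)! \cdot \int_{na - x}^{\infty} z^{n-1} \mathrm{e}^{-b_\lambda z}\, \mathrm{d}z$. The hypothesis $x \le na$ makes the lower limit nonnegative, and the standard Erlang tail identity (via repeated integration by parts) converts this into
\[
J_1 = (c_\lambda)^n\, \mathrm{e}^{-b_\lambda(na-x)} \sum_{m=0}^{n-1}\frac{(na-x)^m}{b_\lambda^{n-m}\, m!},
\]
which is exactly the first group of terms in the claimed formula.

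For $J_2$, Theorem \ref{jointd} with its parameter $x$ replaced by $y + x \ge 0$ expands the integrand as a sum over $m$ of the factor $(y+x)(y+x+ma)^{m-1}((n-m)a-x)^{n-1-m}$ together with the indicator $1_{\{(n-m)a \ge x\}}$ coming from the support condition $y-(y+x)+(n-m)a \ge 0$. Substituting $u = y + x$ separates the $y$-dependence and produces the factor $\int_0^\infty \mathrm{e}^{-b_\lambda u} u(u+ma)^{m-1}\, \mathrm{d}u$ times $\mathrm{e}^{-b_\lambda(na-x)}$, yielding exactly the second group of terms. Subtracting $J_2$ from $J_1$ and pairing like indices $m$ finishes the proof.

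The main delicate point is the bookkeeping: reconciling the support $y \ge -na$ implicit in (\ref{aaa1}) with the cutoff $y \ge -x$ in $J_1$ (which agree precisely when $a \le x \le na$), and carrying the indicator $1_{\{(n-m)a \ge x\}}$ through the $J_2$ calculation so that the two groups of terms line up without residual boundary contributions.
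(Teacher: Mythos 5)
Your proposal is correct and follows essentially the same route as the paper: both arguments combine the marginal law of $X_{\tau_a^n}$ from (\ref{aaa1})/(\ref{L tau x}) with the joint measure of Theorem \ref{jointd} evaluated at level $y+x$, using the observation that $M_{\tau_a^n}\geq 0$ forces $X_{\tau_a^n}>-x$, and the same change of variables. The only difference is cosmetic bookkeeping — you take the complement inside the integral ($J_1-J_2$), while the paper computes $\mathbb{E}[\mathrm{e}^{-\lambda\tau_a^n}]$ minus the tail event split over $\{X_{\tau_a^n}>-x\}$ and $\{X_{\tau_a^n}\leq -x\}$ — and the two reorganizations are algebraically identical.
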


\begin{proof}
We have%
\begin{align}
&  \mathbb{E}\left[  \mathrm{e}^{-\lambda\tau_{a}^{n}};M_{\tau_{a}^{n}%
}-X_{\tau_{a}^{n}}>x\right] \nonumber\\
&  =\int_{-x}^{\infty}\mathbb{E}\left[  \mathrm{e}^{-\lambda\tau_{a}^{n}%
};M_{\tau_{a}^{n}}-X_{\tau_{a}^{n}}>x,X_{\tau_{a}^{n}}\in\mathrm{d}y\right]
+\mathbb{E}\left[  \mathrm{e}^{-\lambda\tau_{a}^{n}};M_{\tau_{a}^{n}}%
-X_{\tau_{a}^{n}}>x,X_{\tau_{a}^{n}}\leq-x\right] \nonumber\\
&  =\int_{-x}^{\infty}\mathbb{E}\left[  \mathrm{e}^{-\lambda\tau_{a}^{n}%
};M_{\tau_{a}^{n}}>x+y,X_{\tau_{a}^{n}}\in\mathrm{d}y\right]  +\mathbb{E}%
\left[  \mathrm{e}^{-\lambda\tau_{a}^{n}};X_{\tau_{a}^{n}}\leq-x\right]
\nonumber\\
&  =\int_{-x}^{\infty}\mathbb{E}\left[  \mathrm{e}^{-\lambda\tau_{a}^{n}%
};M_{\tau_{a}^{n}}>x+y,X_{\tau_{a}^{n}}\in\mathrm{d}y\right]  +\left(
c_{\lambda}/b_{\lambda}\right)  ^{n}\left(  1-\mathrm{e}^{-b_{\lambda}%
(na-x)}\sum_{m=0}^{n-1}\frac{\left(  b_{\lambda}(na-x)\right)  ^{m}}%
{m!}\right)  , \label{MX}%
\end{align}
where the last step is due to (\ref{L tau x}). Moreover, by Theorem
\ref{jointd}, the first term of (\ref{MX})%
\begin{align*}
&  \int_{-x}^{\infty}\mathbb{E}\left[  \mathrm{e}^{-\lambda\tau_{a}^{n}%
};M_{\tau_{a}^{n}}>x+y,X_{\tau_{a}^{n}}\in\mathrm{d}y\right] \\
&  =(c_{\lambda})^{n}\sum_{m=0}^{n-1}\frac{((n-m)a-x)^{n-m-1}\mathrm{1}%
_{\left\{  -x+(n-m)a\geq0\right\}  }}{m!(n-m-1)!}\int_{-x}^{\infty}%
\mathrm{e}^{-b_{\lambda}(y+na)}(x+y)(x+y+ma)^{m-1}\mathrm{d}y\\
&  =(c_{\lambda})^{n}\sum_{m=0}^{n-1}\frac{((n-m)a-x)^{n-m-1}\mathrm{1}%
_{\left\{  x\leq(n-m)a\right\}  }}{m!(n-m-1)!}\int_{0}^{\infty}\mathrm{e}%
^{-b_{\lambda}(z-x+na)}z(z+ma)^{m-1}\mathrm{d}z\\
&  =(c_{\lambda})^{n}\mathrm{e}^{-b_{\lambda}\left(  na-x\right)  }\sum
_{m=0}^{n-1}\frac{((n-m)a-x)^{n-m-1}\mathrm{1}_{\left\{  x\leq(n-m)a\right\}
}}{m!(n-m-1)!}\int_{0}^{\infty}\mathrm{e}^{-b_{\lambda}z}z(z+ma)^{m-1}%
\mathrm{d}z.
\end{align*}
Substituting this back into (\ref{MX}), we complete the proof.\bigskip
\end{proof}

To complete the section, we consider a numerical example to compare the
distribution of the $n$-th drawdown times $\tilde{\tau}_{a}^{n}$ and $\tau
_{a}^{n}$ whose Laplace transforms are given in (\ref{L tau til}) and
(\ref{L tau}), respectively. We implement a numerical inverse Laplace
transform approach proposed by Abate and Whitt \cite{AbatWhit06}. For ease of
notation, we denote the cumulative distribution functions of $\tau_{a}^{n}$
and $\tilde{\tau}_{a}^{n}$ by $F_{n}$ and $\tilde{F}_{n}$, respectively.

\begin{center}
\textbf{Table 4.1 }Distribution of the $n$-th drawdown times when $a=0.1$ and
$\sigma=0.2$%
\[%
\begin{tabular}
[c]{c|c|c|c}\hline
& $\mu=0.1$ & $\mu=0$ & $\mu=-0.1$\\\hline
\multicolumn{1}{l|}{%
\begin{tabular}
[c]{l}%
\\
$n=1$\\
$n=2$\\
$n=3$\\
$n=4$\\
$n=5$\\
$n=6$%
\end{tabular}
} & \multicolumn{1}{|l|}{%
\begin{tabular}
[c]{ll}%
$F_{n}(1)$ & $\tilde{F}_{n}(1)$\\
0.9779 & 0.9779\\
0.8759 & 0.4865\\
0.6651 & 0.1024\\
0.4060 & 0.0082\\
0.1942 & 0.0002\\
0.0721 & 0.0000
\end{tabular}
} & \multicolumn{1}{|l|}{%
\begin{tabular}
[c]{ll}%
$F_{n}(1)$ & $\tilde{F}_{n}(1)$\\
0.9908 & 0.9908\\
0.9366 & 0.4406\\
0.7926 & 0.0885\\
0.5652 & 0.0070\\
0.3262 & 0.0002\\
0.1492 & 0.0000
\end{tabular}
} & \multicolumn{1}{|l}{%
\begin{tabular}
[c]{ll}%
$F_{n}(1)$ & $\tilde{F}_{n}(1)$\\
0.9967 & 0.9967\\
0.9719 & 0.3636\\
0.8874 & 0.0663\\
0.7166 & 0.0050\\
0.4871 & 0.0001\\
0.2696 & 0.0000
\end{tabular}
}\\\hline
\end{tabular}
\
\]

\end{center}

Table 4.1 presents the probabilities that at least $n$ drawdowns with or
without recovery occurs before time $1$ for different values of the drift
$\mu$. We observe that $F_{n}(1)>\tilde{F}_{n}(1)$ for $n\geq2$ due to the
relation between $\tau_{a}^{n}$ and $\tilde{\tau}_{a}^{n}$ given in
(\ref{allo}). In addition, it shows that $F_{n}(1)$ increases as $\mu$
decreases. However, we observe the opposite trend for $\tilde{F}_{n}(1)$ when
$n\geq2$. This is because the previous running maximum is less likely to be
revisited for a smaller $\mu$. Since the drawdown risk is in principle a type
of downside risk, we think smaller $\mu$ should lead to higher downside risks.
In this sense, we suggest that the drawdown times without recovery are better
to capture the essence of drawdown risks.

\begin{center}
\textbf{Table 4.2 }Distribution of drawdown times when $a=0.1$ and
$\sigma=0.12$%
\[%
\begin{tabular}
[c]{c|c|c|c}\hline
& $\mu=0.1$ & $\mu=0$ & $\mu=-0.1$\\\hline
\multicolumn{1}{l|}{%
\begin{tabular}
[c]{l}%
\\
$n=1$\\
$n=2$\\
$n=3$\\
$n=4$\\
$n=5$\\
$n=6$%
\end{tabular}
} & \multicolumn{1}{|l|}{%
\begin{tabular}
[c]{ll}%
$F_{n}(1)$ & $\tilde{F}_{n}(1)$\\
0.5663 & 0.5663\\
0.1592 & 0.0339\\
0.0225 & 0.0002\\
0.0016 & 0.0000\\
0.0001 & 0.0000\\
0.0000 & 0.0000
\end{tabular}
} & \multicolumn{1}{|l|}{%
\begin{tabular}
[c]{ll}%
$F_{n}(1)$ & $\tilde{F}_{n}(1)$\\
0.7845 & 0.7845\\
0.3755 & 0.0494\\
0.0986 & 0.0002\\
0.0137 & 0.0000\\
0.0010 & 0.0000\\
0.0000 & 0.0000
\end{tabular}
} & \multicolumn{1}{|l}{%
\begin{tabular}
[c]{ll}%
$F_{n}(1)$ & $\tilde{F}_{n}(1)$\\
0.9257 & 0.9257\\
0.6509 & 0.0463\\
0.2891 & 0.0002\\
0.0730 & 0.0000\\
0.0099 & 0.0000\\
0.0007 & 0.0000
\end{tabular}
}\\\hline
\end{tabular}
\
\]

\end{center}

Table 4.2 is the equivalent of Table 4.1 with a lower volatility $\sigma
=0.12$. We notice that $F_{n}(1)$ and $\tilde{F}_{n}(1)$ decrease as $\sigma$
decreases. We also have an interesting observation that the trend of
$\tilde{F}_{2}(1)$ is not monotone in $\mu$. Again, this is because the
occurrence of $\tilde{\tau}_{a}^{n}$ for $n\geq2$ necessitates a recovery for
the previous running maximum. Smaller drift does imply higher drawdown risk,
meanwhile the recovery becomes more difficult.

\section{Insurance of frequent relative drawdowns}

In this section, we consider insurance policies protecting against the risk of
frequent drawdowns. We denote the price of an underlying asset by
$S=\{S_{t},t\geq0\}$, with dynamics
\[
\mathrm{d}S_{t}=rS_{t}\mathrm{d}t+\sigma S_{t}\mathrm{d}W_{t}^{%
\mathbb{Q}
}\text{,\qquad}S_{0}=s_{0}\text{,}%
\]
where $r>0$ is the risk-free rate, $\sigma>0$ and $\{W_{t}^{%
\mathbb{Q}
},t\geq0\}$ is a standard Brownian motion under a risk-neutral measure $%
\mathbb{Q}
$. It is well known that
\begin{equation}
S_{t}=s_{0}\mathrm{e}^{X_{t}}, \label{SX}%
\end{equation}
where $X_{t}=(r-\frac{1}{2}\sigma^{2})t+\sigma W_{t}^{%
\mathbb{Q}
}$.

In practice, drawdowns are often quoted in percentage. For fixed $0<\alpha<1$,
we denote the time of the first relative drawdown over size $\alpha$ by
\[
\eta_{\alpha}(S)=\inf\left\{  t\geq0:M_{t}^{S}-S_{t}\geq\alpha M_{t}%
^{S}\right\}  ,
\]
where $M_{t}^{S}=\sup_{0\leq u\leq t}S_{u}$ represents the running maximum of
$S$ by time $t$. By (\ref{SX}), it is easy to see that the relative drawdown
of the geometric Brownian motion $S$ corresponds to the actual drawdown of a
drifted Brownian motion $X$, namely%
\[
\eta_{\alpha}(S)=\inf\left\{  t\geq0:M_{t}^{X}-X_{t}\geq-\log(1-\alpha
)\right\}  =\tau_{\bar{\alpha}}(X),
\]
where $\bar{\alpha}=-\log(1-\alpha)$. Similarly, we denote the relative
drawdown times with and without recovery by%
\[
\tilde{\eta}_{\alpha}^{n}(S)=\inf\{t>\tilde{\eta}_{\alpha}^{n-1}(S):M_{t}%
^{S}-S_{t}\geq\alpha M_{t}^{S},M_{t}^{S}>M_{\tilde{\eta}_{\alpha}^{n-1}%
(S)}^{S}\},
\]
and
\[
\eta_{\alpha}^{n}(S)=\inf\{t>\eta_{\alpha}^{n-1}(S):M_{[\eta_{\alpha}%
^{n-1}(S),t]}^{S}-S_{t}\geq\alpha M_{[\eta_{\alpha}^{n-1}(S),t]}^{S}\}\text{,}%
\]
respectively. Therefore, we have
\begin{equation}
\tilde{\eta}_{\alpha}^{n}(S)=\tilde{\tau}_{\bar{\alpha}}^{n}(X)\qquad
\text{and\qquad}\eta_{\alpha}^{n}(S)=\tau_{\bar{\alpha}}^{n}(X). \label{it}%
\end{equation}

Next, we consider two types of insurance policies offering a protection
against relative drawdowns. For the first one, we assume that the seller pays
the buyer $\$k$ at time $T$ if a total of $k$ relative drawdowns over size
$0<\alpha<1$ occurred prior to time $T$ (for all $k$). For the relative
drawdown times with and without recovery, by (\ref{it}), the risk-neutral
prices are given by
\[
\tilde{V}_{1}(T)=\mathrm{e}^{-rT}\sum_{k=1}^{\infty}k%
\mathbb{Q}
\left\{  \tilde{N}_{T}^{\bar{\alpha}}(X)=k\right\}  =\mathrm{e}^{-rT}%
\mathbb{E}^{%
\mathbb{Q}
}[\tilde{N}_{T}^{\bar{\alpha}}(X)],
\]
and
\[
V_{1}(T)=\mathrm{e}^{-rT}\sum_{k=1}^{\infty}k%
\mathbb{Q}
\left\{  N_{T}^{\bar{\alpha}}(X)=k\right\}  =\mathrm{e}^{-rT}\mathbb{E}^{%
\mathbb{Q}
}[N_{T}^{\bar{\alpha}}(X)],
\]
respectively. For the second type of policies, the seller pays the buyer $\$1$
at the time of each relative drawdown time as long as it occurs before
maturity $T$. Hence, their risk-neutral prices are
\[
\tilde{V}_{2}(T)=\sum_{k=1}^{\infty}\mathbb{E}^{%
\mathbb{Q}
}[\mathrm{e}^{-r\tilde{\tau}_{\bar{\alpha}}^{k}(X)};\tilde{\tau}_{\bar{\alpha
}}^{k}(X)\leq T],
\]
and
\[
V_{2}(T)=\sum_{k=1}^{\infty}\mathbb{E}^{%
\mathbb{Q}
}[\mathrm{e}^{-r\tau_{\bar{\alpha}}^{k}(X)};\tau_{\bar{\alpha}}^{k}(X)\leq
T]\text{,}%
\]
respectively.

\begin{corollary}
\label{price} For $\lambda>0$, we have%
\[%
\begin{array}
[c]{ll}%
\int_{0}^{\infty}\mathrm{e}^{-\lambda T}V_{1}(T)\mathrm{d}T=\frac{1}%
{\lambda+r}\frac{\bar{c}_{\lambda+r}/\bar{b}_{\lambda+r}}{1-\bar{c}%
_{\lambda+r}/\bar{b}_{\lambda+r}}, & \int_{0}^{\infty}\mathrm{e}^{-\lambda
T}\tilde{V}_{1}(T)\mathrm{d}T=\frac{1}{\lambda+r}\frac{\bar{c}_{\lambda
+r}/\bar{b}_{\lambda+r}}{1-\mathrm{e}^{-\bar{\beta}_{\lambda+r}^{+}a}\bar
{c}_{\lambda+r}/\bar{b}_{\lambda+r}},\\
\int_{0}^{\infty}\mathrm{e}^{-\lambda T}V_{2}(T)\mathrm{d}T=\frac{1}{\lambda
}\frac{\bar{c}_{\lambda+r}/\bar{b}_{\lambda+r}}{1-\bar{c}_{\lambda+r}/\bar
{b}_{\lambda+r}}, & \int_{0}^{\infty}\mathrm{e}^{-\lambda T}\tilde{V}%
_{2}(T)\mathrm{d}T=\frac{1}{\lambda}\frac{\bar{c}_{\lambda+r}/\bar{b}%
_{\lambda+r}}{1-\mathrm{e}^{-\bar{\beta}_{\lambda+r}^{+}a}\bar{c}_{\lambda
+r}/\bar{b}_{\lambda+r}},
\end{array}
\]
where $\bar{b}_{\lambda}=\frac{\bar{\beta}_{\lambda}^{+}\mathrm{e}%
^{-\bar{\beta}_{\lambda}^{-}\bar{\alpha}}-\bar{\beta}_{\lambda}^{-}%
\mathrm{e}^{-\bar{\beta}_{\lambda}^{+}\bar{\alpha}}}{\mathrm{e}^{-\bar{\beta
}_{\lambda}^{-}\bar{\alpha}}-\mathrm{e}^{-\bar{\beta}_{\lambda}^{+}\bar
{\alpha}}}$, $\bar{c}_{\lambda}=\frac{\bar{\beta}_{\lambda}^{+}-\bar{\beta
}_{\lambda}^{-}}{\mathrm{e}^{-\bar{\beta}_{\lambda}^{-}\bar{\alpha}%
}-\mathrm{e}^{-\bar{\beta}_{\lambda}^{+}\bar{\alpha}}}$ and $\bar{\beta
}_{\lambda}^{\pm}=\frac{-r+\frac{1}{2}\sigma^{2}\pm\sqrt{(r-\frac{1}{2}%
\sigma^{2})^{2}+2\lambda\sigma^{2}}}{\sigma^{2}}$.
\end{corollary}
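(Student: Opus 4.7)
The plan is to express each of the four prices as an infinite sum over the drawdown index $k$ and then interchange that sum with the Laplace integral in $T$, so that everything reduces to the transforms $\mathbb{E}^{\mathbb{Q}}[\mathrm{e}^{-(\lambda+r)\tau_{\bar{\alpha}}^{k}(X)}]$ and $\mathbb{E}^{\mathbb{Q}}[\mathrm{e}^{-(\lambda+r)\tilde{\tau}_{\bar{\alpha}}^{k}(X)}]$ already obtained in (\ref{L tau}) and (\ref{L tau til}). Under $\mathbb{Q}$, the process $X$ is a drifted Brownian motion with drift $r-\sigma^{2}/2$ and volatility $\sigma$, so those formulas apply verbatim with $\mu$ replaced by $r-\sigma^{2}/2$; the quantities $b_{\lambda},c_{\lambda},\beta_{\lambda}^{\pm}$ then become exactly the $\bar{b}_{\lambda},\bar{c}_{\lambda},\bar{\beta}_{\lambda}^{\pm}$ of the statement.

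For $V_{2}$ and $\tilde{V}_{2}$, I would use Tonelli to pull the sum in $k$ outside the integral in $T$ and apply the elementary identity
\[
\int_{0}^{\infty}\mathrm{e}^{-\lambda T}\mathrm{e}^{-rT_{k}}\mathbf{1}_{\{T_{k}\leq T\}}\,\mathrm{d}T=\frac{\mathrm{e}^{-(\lambda+r)T_{k}}}{\lambda},
\]
where $T_{k}$ is either $\tau_{\bar{\alpha}}^{k}$ or $\tilde{\tau}_{\bar{\alpha}}^{k}$; taking $\mathbb{E}^{\mathbb{Q}}$ and summing over $k$ produces $\lambda^{-1}\sum_{k\geq1}\mathbb{E}^{\mathbb{Q}}[\mathrm{e}^{-(\lambda+r)T_{k}}]$. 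For $V_{1}$ and $\tilde{V}_{1}$, I would instead rewrite the expectation of the counting process as $\mathbb{E}^{\mathbb{Q}}[N_{T}^{\bar{\alpha}}(X)]=\sum_{k\geq1}\mathbb{Q}\{\tau_{\bar{\alpha}}^{k}\leq T\}$ (and similarly for $\tilde{N}$), and then use the classical Laplace identity $\int_{0}^{\infty}\mathrm{e}^{-sT}\mathbb{Q}\{T_{k}\leq T\}\,\mathrm{d}T=s^{-1}\mathbb{E}^{\mathbb{Q}}[\mathrm{e}^{-sT_{k}}]$ with $s=\lambda+r$. This explains why the four expressions share the same summands but differ only in whether the prefactor is $\lambda^{-1}$ or $(\lambda+r)^{-1}$.

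Substituting the known forms $\mathbb{E}^{\mathbb{Q}}[\mathrm{e}^{-(\lambda+r)\tau_{\bar{\alpha}}^{k}}]=(\bar{c}_{\lambda+r}/\bar{b}_{\lambda+r})^{k}$ from (\ref{L tau}) and $\mathbb{E}^{\mathbb{Q}}[\mathrm{e}^{-(\lambda+r)\tilde{\tau}_{\bar{\alpha}}^{k}}]=(\bar{c}_{\lambda+r}/\bar{b}_{\lambda+r})^{k}\mathrm{e}^{-(k-1)\bar{\beta}_{\lambda+r}^{+}\bar{\alpha}}$ from (\ref{L tau til}), what remains in each case is a geometric series with common ratio $\bar{c}_{\lambda+r}/\bar{b}_{\lambda+r}$ for the without-recovery prices and $\mathrm{e}^{-\bar{\beta}_{\lambda+r}^{+}\bar{\alpha}}\,\bar{c}_{\lambda+r}/\bar{b}_{\lambda+r}$ for the with-recovery prices. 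Summing yields the four claimed closed forms.

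The argument is essentially bookkeeping; the only genuine checkpoint is convergence of the geometric series, i.e.\ that both common ratios lie in $(0,1)$ for every $\lambda>0$. This is automatic, since $\bar{c}_{\lambda+r}/\bar{b}_{\lambda+r}=\mathbb{E}^{\mathbb{Q}}[\mathrm{e}^{-(\lambda+r)\tau_{\bar{\alpha}}}]<1$ as $\lambda+r>0$, and $\mathrm{e}^{-\bar{\beta}_{\lambda+r}^{+}\bar{\alpha}}\bar{c}_{\lambda+r}/\bar{b}_{\lambda+r}$ equals the Laplace transform at $\lambda+r$ of the strictly positive difference $\tilde{\tau}_{\bar{\alpha}}^{2}-\tilde{\tau}_{\bar{\alpha}}^{1}$ by (\ref{L tau til}), hence is strictly less than one. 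No additional ingredient beyond the Laplace transforms of Sections 3 and 4 is needed.
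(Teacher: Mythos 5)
Your proposal is correct and follows essentially the same route as the paper: interchange the sum over $k$ with the Laplace integral in $T$, reduce everything to $\mathbb{E}^{\mathbb{Q}}[\mathrm{e}^{-(\lambda+r)\tau_{\bar{\alpha}}^{k}}]$ and $\mathbb{E}^{\mathbb{Q}}[\mathrm{e}^{-(\lambda+r)\tilde{\tau}_{\bar{\alpha}}^{k}}]$ via (\ref{L tau}) and (\ref{L tau til}), and sum the resulting geometric series. Your explicit check that both common ratios lie in $(0,1)$ is a small but welcome addition the paper leaves implicit.
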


\begin{proof}
We provide the proof for $\int_{0}^{\infty}V_{1}(T)\mathrm{e}^{-\lambda
T}\mathrm{d}T$ and $\int_{0}^{\infty}V_{2}(T)\mathrm{e}^{-\lambda T}%
\mathrm{d}T$ only. The other two results can be derived in a similar fashion.
From the definition of $N_{T}^{\bar{\alpha}}(X)$, we have the following
relation
\[
\mathbb{E}^{%
\mathbb{Q}
}\left[  N_{T}^{\bar{\alpha}}(X)\right]  =\sum_{k=1}^{\infty}\mathbb{Q}%
\left\{  N_{T}^{\bar{\alpha}}(X)\geq k\right\}  =\sum_{k=1}^{\infty}%
\mathbb{Q}\left\{  \tau_{\bar{\alpha}}^{k}(X)\leq T\right\}  .
\]
By (\ref{L tau}), it follows that%
\begin{align*}
\int_{0}^{\infty}V_{1}(T)\mathrm{e}^{-\lambda T}\mathrm{d}T  &  =\int
_{0}^{\infty}\mathrm{e}^{-(\lambda+r)T}\mathbb{E}^{%
\mathbb{Q}
}[N_{T}^{\bar{\alpha}}(X)]\mathrm{d}T\\
&  =\sum_{k=1}^{\infty}\int_{0}^{\infty}\mathrm{e}^{-(\lambda+r)T}%
\mathbb{Q}\left\{  \tau_{\bar{\alpha}}^{k}(X)\leq T\right\}  \mathrm{d}T\\
&  =\frac{1}{\lambda+r}\sum_{k=1}^{\infty}\mathbb{E}^{%
\mathbb{Q}
}[\mathrm{e}^{-(\lambda+r)\tau_{\bar{\alpha}}^{k}(X)}]\\
&  =\frac{1}{\lambda+r}\sum_{k=1}^{\infty}\left(  \frac{\bar{c}_{\lambda+r}%
}{\bar{b}_{\lambda+r}}\right)  ^{n}\\
&  =\frac{1}{\lambda+r}\frac{\bar{c}_{\lambda+r}/\bar{b}_{\lambda+r}}%
{1-\bar{c}_{\lambda+r}/\bar{b}_{\lambda+r}}.
\end{align*}
For $\int_{0}^{\infty}V_{2}(T)\mathrm{e}^{-\lambda T}\mathrm{d}T$, by Fubini's
theorem and (\ref{L tau}), we have%
\begin{align*}
\int_{0}^{\infty}V_{2}(T)\mathrm{e}^{-\lambda T}\mathrm{d}T  &  =\sum
_{k=1}^{\infty}\int_{0}^{\infty}\mathbb{E}^{%
\mathbb{Q}
}[\mathrm{e}^{-r\tau_{\bar{\alpha}}^{k}(X)};\tau_{\bar{\alpha}}^{k}(X)\leq
T]\mathrm{e}^{-\lambda T}\mathrm{d}T\\
&  =\sum_{k=1}^{\infty}\int_{0}^{\infty}\int_{0}^{T}\mathrm{e}^{-rt}%
\mathbb{Q}
\left\{  \tau_{\bar{\alpha}}^{k}(X)\in\mathrm{d}t\right\}  \mathrm{e}%
^{-\lambda T}\mathrm{d}T\\
&  =\sum_{k=1}^{\infty}\frac{1}{\lambda}\int_{0}^{\infty}\mathrm{e}%
^{-(\lambda+r)t}%
\mathbb{Q}
\left\{  \tau_{\bar{\alpha}}^{n}(X)\in\mathrm{d}t\right\} \\
&  =\sum_{k=1}^{\infty}\frac{1}{\lambda}\left(  \frac{\bar{c}_{\lambda+r}%
}{\bar{b}_{\lambda+r}}\right)  ^{n}\\
&  =\frac{1}{\lambda}\frac{\bar{c}_{\lambda+r}/\bar{b}_{\lambda+r}}{1-\bar
{c}_{\lambda+r}/\bar{b}_{\lambda+r}}.
\end{align*}
This completes the proof.
\end{proof}

\begin{remark}
It is worth pointing out that, through expansion of the randomized prices in
Corollary \ref{price} in terms of exponentials, it is possible to obtain
semi-static hedging portfolios as in \cite{CarrZhanHaji}. Moreover, capped
insurance contracts against frequency of drawdowns can also be formulated and
priced using Theorems \ref{thm M til L}, \ref{yyz}, and Corollary \ref{aab}.
\end{remark}

To conclude, we consider a pricing example for the four types of insurance
contracts proposed earlier. The same numerical Laplace transform approach as
in the last section is applied.

\begin{center}
\textbf{Table 5.1 }Insurance contracts prices when $\alpha=15\%$ and $r=5\%$%
\[%
\begin{tabular}
[c]{ll|l|l|l|l}\hline
&  & $V_{1}(T)$ & $\tilde{V}_{1}(T)$ & $V_{2}(T)$ & $\tilde{V}_{2}(T)$\\\hline
$T=1$ & $\sigma=0.1$ & 0.1102 & 0.1091 & 0.1120 & 0.1108\\
$T=2$ & $\sigma=0.1$ & 0.3011 & 0.2769 & 0.3131 & 0.2885\\
$T=3$ & $\sigma=0.1$ & 0.4743 & 0.4031 & 0.5058 & 0.4318\\\hline
$T=1$ & $\sigma=0.2$ & 1.1777 & 0.7873 & 1.2043 & 0.8081\\
$T=2$ & $\sigma=0.2$ & 2.3815 & 1.1842 & 2.4977 & 1.2550\\
$T=3$ & $\sigma=0.2$ & 3.4651 & 1.4519 & 3.7279 & 1.5890\\\hline
\end{tabular}
\
\]

\end{center}

As expected, Table 5.1 shows that type 2 contracts have higher prices than
type 1 contracts because of earlier payments (at the moment of each drawdown
time instead of the maturity $T$). It also shows that $\tilde{V}_{1}(T)$ and
$\tilde{V}_{2}(T)$ are respectively lower than $V_{1}(T)$ and $V_{2}(T)$ due
to $\tau_{a}^{n}\leq\tilde{\tau}_{a}^{n}$. All the prices increase as $T$
increases or $\sigma$ increases. Moreover, we can expect that the prices will
decrease as $\alpha$ or $r$ increases. The latter is due to a higher discount
rate which is the risk-free rate under the risk-neutral measure $%
\mathbb{Q}
$.

\vskip0.5cm

\noindent\textbf{Acknowledgments.} The authors would like to thank Professor
Gord Willmot and an anonymous referee for their helpful remarks and
suggestions. Support for David Landriault from a grant from the Natural
Sciences and Engineering Research Council of Canada is gratefully
acknowledged, as is support for Bin Li from a start-up grant from the
University of Waterloo.

\baselineskip13pt


\end{document}